 \font \eightrm=cmr8
 \newcommand{\nc}{\newcommand}
\newtheorem{thm}{Theorem}
\newtheorem{exam}{Example}
\newtheorem{cor}[thm]{Corollary}
\newtheorem{lem}[thm]{Lemma}
\newtheorem{prop}[thm]{Proposition}
\newtheorem{defn}{Definition}
\def\IIpointiloop{{\scalebox{0.15}{ 
   \begin{picture}(150,45) (120,-45)
    \SetWidth{1.0}
    \SetColor{Black}
    \Line(120,-45)(270,-45)
    \CArc(195,-45)(45,-0,180)
  \end{picture}
}}}
\def\IIpointiiloop{{\scalebox{0.15}{ 
  \begin{picture}(150,45) (120,-45)
    \SetWidth{1.0}
    \SetColor{Black}
    \Line(120,-45)(270,-45)
    \CArc(195,-45)(45,-0,180)
    \CArc(195,-45)(30,-0,180)
  \end{picture}
}}}
\def\IIpointiiiloop{{\scalebox{0.15}{ 
   \begin{picture}(150,75) (120,-45)
    \SetWidth{1.0}
    \SetColor{Black}
    \Line(120,-15)(270,-15)
    \CArc(195,-15)(45,-0,180)
    \CArc(195,37.5)(37.5,-143.13,-36.87)
    \CArc(195,-15)(30,-180,0)
  \end{picture}
}}}
\def\IIIpointiloop{{\scalebox{0.13}{ 
  \begin{picture}(150,120) (120,-30)
    \SetWidth{1.0}
    \SetColor{Black}
    \Line(150,30)(270,-30)
    \Line(150,30)(270,90)
    \Line(240,-15)(240,75)
    \Line(120,30)(150,30)
  \end{picture}
}}}
\def\IIIpointiiloopA{{\scalebox{0.13}{ 
\begin{picture}(150,120) (120,-30)
    \SetWidth{1.0}
    \SetColor{Black}
    \Line(150,30)(270,-30)
    \Line(150,30)(270,90)
    \Line(240,-15)(240,75)
    \Line(120,30)(150,30)
    \Line(210,0)(210,60)
  \end{picture}

}}}
\def\IIIpointiiloopB{{\scalebox{0.13}{ 
  \begin{picture}(150,120) (120,-30)
    \SetWidth{1.0}
    \SetColor{Black}
    \Line(150,30)(270,-30)
    \Line(150,30)(270,90)
    \Line(240,-15)(240,75)
    \Line(120,30)(150,30)
    \CArc(240,30)(30,90,270)
  \end{picture}
}}}
\def\IVpointiloop{{\scalebox{0.15}{ 
    \begin{picture}(150,60) (120,-45)
    \SetWidth{1.0}
    \SetColor{Black}
    \CArc(195,-33.75)(48.75,22.62,157.38)
    \CArc(195,3.75)(48.75,-157.38,-22.62)
    \Line(120,-30)(150,-15)
    \Line(120,0)(150,-15)
    \Line(240,-15)(270,0)
    \Line(240,-15)(270,-30)
  \end{picture}

}}}
\def\IVpointiiloop{{\scalebox{0.15}{ 
  \begin{picture}(150,119) (120,-15)
    \SetWidth{1.0}
    \SetColor{Black}
    \Line(120,59)(270,-16)
    \Line(120,29)(270,104)
    \CArc(300,44)(75,143.13,216.87)
    \CArc(180,44)(75,-36.87,36.87)
  \end{picture}}}}
\def\IgraphBox{{\scalebox{0.35}{ 
 \begin{picture}(194,110) (31,-41)
    \SetWidth{1.0}
    \SetColor{Black}
    \CBox(89,25)(168,68){Black}{White}
    \CBox(89,-40)(168,3){Black}{White}
    \SetWidth{1.5}
    \Arc[clock](128,65)(35.777,-26.565,-153.435)
    \Line(32,14)(80,14)
    \Line(176,14)(224,14)
    \Arc(128,14)(48,180,540)
    \Arc(128,-37)(35.777,26.565,153.435)
  \Text(62,30)[lb]{\Huge{\Black{$\Gamma$}}}
    \Text(120,-25)[lb]{\LARGE{\Black{$\gamma_1$}}}
    \Text(120,40)[lb]{\LARGE{\Black{$\gamma_2$}}}
  \end{picture}}}}
 \def\IgraphBoxA{{\scalebox{0.35}{ 
  \begin{picture}(194,104) (31,-41)
    \SetWidth{1.0}
    \SetColor{Black}
    \CBox(89,19)(168,62){Black}{White}
    \SetWidth{1.5}
    \Arc[clock](128,59)(35.777,-26.565,-153.435)
    \Line(32,8)(80,8)
    \Line(176,8)(224,8)
    \Arc(128,8)(48,180,540)
    \Arc(128,-43)(35.777,26.565,153.435)
    \Text(62,30)[lb]{\Huge{\Black{$\Gamma$}}}
    \Text(120,-32)[lb]{\LARGE{\Black{$\gamma_1$}}}
    \Text(120,34)[lb]{\LARGE{\Black{$\gamma_2$}}}
  \end{picture}}}}
 \def\IgraphBoxB{{\scalebox{0.35}{ 
  \begin{picture}(194,104) (31,-47)
    \SetWidth{1.0}
    \SetColor{Black}
    \CBox(89,-46)(168,-3){Black}{White}
    \SetWidth{1.5}
    \Arc[clock](128,59)(35.777,-26.565,-153.435)
    \Line(32,8)(80,8)
    \Line(176,8)(224,8)
    \Arc(128,8)(48,180,540)
    \Arc(128,-43)(35.777,26.565,153.435)
     \Text(62,30)[lb]{\Huge{\Black{$\Gamma$}}}
    \Text(120,-32)[lb]{\LARGE{\Black{$\gamma_1$}}}
    \Text(120,34)[lb]{\LARGE{\Black{$\gamma_2$}}}  
\end{picture}}}}
\def\IgraphGamma{{\scalebox{0.20}{ 
  \begin{picture}(194,110) (31,-41)
    \SetWidth{1.0}
    \SetColor{Black}
    \SetWidth{1.5}
    \Arc[clock](128,65)(35.777,-26.565,-153.435)
    \Line(32,14)(80,14)
    \Line(176,14)(224,14)
    \Arc(128,14)(48,180,540)
    \Arc(128,-37)(35.777,26.565,153.435)
    \Text(120,-25)[lb]{\Huge{\Black{$\gamma_1$}}}
    \Text(120,38)[lb]{\Huge{\Black{$\gamma_2$}}}
  \end{picture}}}}
\def\IgraphB{{\scalebox{0.35}{
 \begin{picture}(194,98) (31,-47)
    \SetWidth{1.5}
    \SetColor{Black}
    \Line(32,2)(80,2)
    \Line(176,2)(224,2)
   \Arc(128,2)(48,180,540)
   \Arc(128,-49)(35.777,26.565,153.435)
    \Text(5,18)[lb]{\Huge{\Black{$\Gamma/S_{2}$}}}
    \Text(120,-34)[lb]{\LARGE{\Black{$\gamma_1$}}}   
  \end{picture}}}}
\def\IgraphC{{\scalebox{0.35}{
  \begin{picture}(194,98) (31,-47)
    \SetWidth{1.5}
    \SetColor{Black}
    \Line(32,2)(80,2)
    \Line(176,2)(224,2)
   \Arc(128,2)(48,180,540)
    \Text(5,18)[lb]{\Huge{\Black{$\Gamma/S_{12}$}}}
  \end{picture}}}}
\nc{\ignore}[1]{{}}
\nc{\mrm}[1]{{\rm #1}}
\nc{\dirlim}{\displaystyle{\lim_{\longrightarrow}}\,}
\nc{\invlim}{\displaystyle{\lim_{\longleftarrow}}\,}
\nc{\vep}{\varepsilon} \nc{\ep}{\epsilon}
\nc{\sigmat}{\widetilde\sigma}
\nc{\ostar}{\overline{*}}
\nc{\mchar}{\mrm{Char}}
\nc{\Hom}{\mrm{Hom}}
\nc{\id}{\mrm{id}}
\nc{\remark}{\noindent{\bf{Remark:}}}
\nc{\remarks}{\noindent{\bf{Remarks:}}}
 \nc{\delete}[1]{}
 \nc{\grad}[1]{^{({#1})}}
 \nc{\fil}[1]{_{#1}}
\nc{\BA}{{\Bbb A}} \nc{\CC}{{\Bbb C}} \nc{\DD}{{\Bbb D}}
\nc{\EE}{{\Bbb E}} \nc{\FF}{{\Bbb F}} \nc{\GG}{{\Bbb G}}
\nc{\HH}{{\Bbb H}} \nc{\LL}{{\Bbb L}} \nc{\NN}{{\Bbb N}}
\nc{\PP}{{\Bbb P}} \nc{\QQ}{{\Bbb Q}} \nc{\RR}{{\Bbb R}}
\nc{\TT}{{\Bbb T}} \nc{\VV}{{\Bbb V}} \nc{\ZZ}{{\Bbb Z}}
\nc{\Cal}[1]{{\mathcal {#1}}}
\nc{\mop}[1]{\mathop{\hbox {\rm #1} }}
\nc{\smop}[1]{\mathop{\hbox {\eightrm #1} }}
\nc{\mopl}[1]{\mathop{\hbox {\rm #1} }\limits}
\nc{\frakg}{{\frak g}}
\nc{\g}[1]{{\frak {#1}}}
\def \restr#1{\mathstrut_{\textstyle |}\raise-8pt\hbox{$\scriptstyle #1$}}
\def \srestr#1{\mathstrut_{\scriptstyle |}\hbox to
  -1.5pt{}\raise-4pt\hbox{$\scriptscriptstyle #1$}}
\nc{\wt}{\widetilde}
\nc{\wh}{\widehat}
\nc{\un}{\hbox{\bf 1}}
\nc{\redtext}[1]{\textcolor{red}{\tt #1}}
\nc{\bluetext}[1]{\textcolor{blue}{#1}}
\nc{\comment}[1]{[[{\tt {#1}}]] }
\nc{\R}{{\mathbb R}}
\nc\fleche[1]{\mathop{\hbox to #1 mm{\rightarrowfill}}\limits}
\def\semi{\mathrel{\times}\kern -.85pt\joinrel\mathrel{\raise 1.4pt\hbox{${\scriptscriptstyle |}$}}}
\begin{document}

\title[Exponential renormalisation and Bogoliubov's $R$-operation]
{{\Large{Exponential Renormalisation II}} \\[0.2cm] 
{\small{Bogoliubov's $R$-operation and momentum subtraction schemes\\ \phantom{m}}}}

\author{Kurusch Ebrahimi-Fard}
\address{Instituto de Ciencias Matem\'aticas,
		C/ Nicol\'as Cabrera, no.~13-15, 28049 Madrid, Spain.
		On leave from Univ.~de Haute Alsace, Mulhouse, France}
         \email{kurusch@icmat.es, kurusch.ebrahimi-fard@uha.fr}         
         \urladdr{www.icmat.es/kurusch}

\author{Fr\'ed\'eric Patras}
\address{Laboratoire J.-A.~Dieudonn\'e
         		UMR 6621, CNRS,
         		Parc Valrose,
         		06108 Nice Cedex 02, France.}
\email{patras@math.unice.fr}
\urladdr{www-math.unice.fr/~patras}

\date{April, 2nd 2012}

\begin{abstract}
This article aims at advancing the recently introduced exponential method for renormalisation in perturbative quantum field theory. It is shown that this new procedure provides a meaningful recursive scheme in the context of the algebraic and group theoretical approach to renormalisation. In particular, we describe in detail a Hopf algebraic formulation of Bogoliubov's classical $R$-operation and counterterm recursion in the context of momentum subtraction schemes. This approach allows us to propose an algebraic classification of different subtraction schemes. Our results shed light on the peculiar  algebraic role played by the degrees of Taylor jet expansions, especially the notion of minimal subtraction and oversubtractions.        
\end{abstract}

\maketitle
\tableofcontents

\section{Introduction}
\label{sect:intro}

Let us start with A.~S.~Wightman's characterisation of mathematical physics as ``the pursuit of significant structure in physical theory"\footnote{From the editor's foreword in Mackey's monograph on the Mathematical Foundations of Quantum Mechanics \cite{Mackey}.}. D.~Kreimer recently \cite{Kreimer} uncovered a Hopf algebra structure underlying Feynman graphs. This insight marked the starting point of a refined approach to the understanding of the combinatorics of renormalisation in perturbative quantum field theory (QFT). It has also led to various new developments in mathematics. Indeed, the mathematical community realized soon after Kreimer's work that renormalisation techniques familiar in perturbative QFT, such as minimal subtraction (MS) in dimensional regularisation (DR) also make sense, for example, to study singularities of hypergeometric functions, or to define rough paths. The existence of underlying Hopf algebra structures is essential to these extensions of QFT techniques beyond their usual application domains.

In this spirit, the present article sets out to further investigate the fine (Hopf) algebraic structure of the renormalisation process in perturbative QFT. It focusses on ``physical'' renormalisation schemes, such as momentum subtraction schemes. Here the external structure, i.e.~external momenta, of Feynman graphs is taken into account in the definition of the subtraction maps -- e.g.~in order to fix renormalisation conditions. One of our principal goals is the extension of the elegant Connes--Kreimer Hopf algebra approach to perturbative renormalisation \cite{CKI,CKII,CKIII} to such schemes. The motivation for this stems from the observation that the Rota--Baxter condition imposed on the subtraction map, which is essential in the aforementioned works, is not satisfied in general. In fact, it applies straightforwardly only in the context of the MS scheme in DR. 

The exponential renormalisation method introduced in \cite{EFP} allows to include multiplicative renormalisation into the Hopf algebraic framework of renormalisation, since the underlying recursive process can be understood as a recursive reparametrisation of the Lagrangian of the theory. Relying on a slightly modified version of the method, we show here that, even in the case when a proper Rota--Baxter structure is not available, one can still obtain recursively a meaningful Birkhoff--Wiener--Hopf (BWH) type decomposition of Hopf algebra characters. Hence, a group-theoretical construction of the counterterm and renormalised Feynman rules. We show that when the underlying recursion is properly defined, so as to take into account the features of the momentum subtraction maps including the subtraction degree, this decomposition agrees with Bogoliubov's formulae. These results are encoded in Theorem~\ref{thm:mainExpRenII}. 
 
Concretely, we describe in detail a Hopf algebraic formulation of Bogoliubov's classical $R$-operation and counterterm recursion \cite{BoPa} in the context of Taylor jet expansions. To motivate our procedure we show briefly how the Hopf algebraic approach to renormalisation implies an algebraic constraint on subtraction maps. Later, we introduce several simple but seemingly new ideas in order to improve the mathematical account of the theory of renormalisation.

In particular we would like to underline that the here presented exponential method allows us to propose a purely algebraic classification of renormalisation schemes. The strong algebraic flavor of this classification is mainly motivated by the Hopf algebraic framework used to reorganize renormalisation. Regarding momentum subtraction schemes, one of the interesting aspects is the distinction between the notions of minimal and oversubtraction from an algebraic point of view. This important difference is reflected in our classification, and sheds light on the joint use of the two schemes in the classical QFT literature. Recall in particular the introduction of the notion of oversubtraction in the context of the proof of the forest formula by Zimmermann \cite{Zimmermann1,Zimmermann2}, which has been one of the cornerstones in the modern treatment of renormalisation -- see e.g.~Chetyrkin's historical account in the introduction of \cite{Chetyrkin}.

At the end we remark that the present algebraic approach leads naturally to a new construction of the counterterm as well as of the corresponding BWH decomposition, with properties one expects from such a decomposition. Though it differs slightly from the one implied by Bogoliubov's $R$-operation in a straightforward sense. In this article we solely focus on foundational aspects of the exponential renormalisation method, and intend to address more specific issues in the context of these new ideas on algebraic structures of renormalisation schemes in forthcoming works.

The paper is organised as follows. In section \ref{sect:RoperationHopf} we recall the basics of Feynman graph calculus as well as Taylor jet subtractions. Bogoliubov's $R$-operation together with momentum subtraction schemes are introduced briefly. In the light of the Hopf algebraic approach to perturbative renormalisation, section \ref{sect:schemes} contains a tentative proposal for an algebraic classification of subtraction schemes. Section \ref{sect:ExpRen} elaborates on a slightly modified version of the recently introduced exponential method of perturbative renormalisation in the context of the aforementioned classification.


\section{Bogoliubov's $R$-operation and Feynman graphs}
\label{sect:RoperationHopf}

In the following we fix the notations and briefly recall how Feynman amplitudes are computed. Then we introduce Bogoliubov's $R$-operation and counterterm recursion. The setting is developed for momentum space renormalization and euclidean signature.\\


{\bf{Preliminaries}}. Perturbation theory is effectively expressed using Feynman graphs. From a Lagrangian function one can derive Feynman rules. Via these rules any Feynman graph~$\Gamma$ corresponds to a Feynman amplitude $J_\Gamma$. For details see e.g.~\cite{IZ} or any other QFT textbook. In the following we denote this correspondence by $\phi$. It is extended to a linear form on the polynomial algebra $H$ over the set $F$ of ---one particle irreducible (1PI) ultraviolet (UV) divergent--- Feynman graphs by:
$$
	\phi(\Gamma_1 \cdots \Gamma_k):=\phi(\Gamma_1) \cdots \phi(\Gamma_k)
	=J_{\Gamma_1} \cdots J_{\Gamma_k},
$$ 
where $\Gamma_1, \ldots, \Gamma_k$ is are 1PI Feynman graphs. Observe that the product of these graphs on the left hand side is the one in the (Hopf) algebra $H$, whereas the product on the right hand side is the product of amplitudes in the underlying field $k=\mathbb{C}$.

In general $J_\Gamma$ is a multiple $D(=4)$-dimensional momentum space integral multiplied by a certain power of the coupling constant:
\begin{equation}
	\Gamma \xrightarrow{\phantom{mm}\phi \phantom{mm}}\phi(\Gamma)(p^\Gamma,s;g)
			:= J_\Gamma(p^\Gamma,s;g) 
			 = g^{V_\Gamma} \int\prod_{i=1}^{L_\Gamma}\,d^Dk_i \ I_\Gamma(p^\Gamma,k^\Gamma,s).
\label{eq:stone-of-contention}
\end{equation}
Here $V_\Gamma$ is the number of vertices of the graph $\Gamma$ (for notational simplicity, we consider the case where there is a unique coupling constant). $L_\Gamma=:|\Gamma|$ denotes the number of loops in the diagram, and $k^\Gamma:=(k_1,\ldots,k_{|\Gamma|})$ are the corresponding independent internal (loop) momenta, i.e.~each independent loop yields one integration. The external momenta are denoted by $p^\Gamma:=(p_1,\ldots,p_n)$, with overall momentum conservation, $\sum_{i=1}^n p_i=0$. They are represented by external legs of the graph $\Gamma$. Other parameters, such as masses, are collected in $s$. Note that in any given theory, for each Feynman graph there exists a rigid relation between its numbers of loops and vertices. For instance, in the $\varphi^4_4$-model, for graphs with two external legs the number of vertices just equals the number of loops. However, for graphs with four external legs the number of vertices equals the number of loops plus one, and so on. We refer the reader to the standard references for details. 

The integrand $I_\Gamma(p^\Gamma,k^\Gamma,s)$ is a product of propagators and vertex terms. For most (i.e.~$D=4$, renormalisable) quantum field theories (with dimensionless couplings), these integrals suffer from UV divergencies. Note that in this work we ignore the problem of infrared (IR) divergencies completely. Concretely, under a scale transformation, the integrand behaves as:
$$
	\bigg[\prod_{l=1}^{|\Gamma|}d^D(\lambda k_l)\bigg] I_\Gamma(\lambda p^\Gamma,\lambda k^\Gamma,s) \sim \lambda^{\omega(\Gamma)},
$$
with~$\omega(\Gamma)$ the overall UV degree of divergence of the graph~$\Gamma$. Power-counting renormalisable theories are such that all interaction terms in the Lagrangian are of dimension smaller than or equal to $D$; then $\omega(\Gamma)$ is bounded by a number independent of the order of the graph, i.e.~its number of vertices.  For the sake of simplicity we only consider scalar bosonic theories to illustrate this. For instance in the $\Phi^4_4$-model the overall UV degree of divergence of a graph with~$N$ external legs is:
$$
    \omega(\Gamma) = 4L - 2l = 2l - 4V + 4 = 4 - N.
$$
Here $L$, $l$ and $V$ are the numbers of loops, internal lines and vertices of the graph $\Gamma$, respectively. The second equality follows from the relation between the numbers of loops and vertices in a graph, i.e. $L = l - V + 1$. The last equality, saying that $\omega(\Gamma)$ only depends on the number of external legs, reflects the fact that the $\Phi^4_4$-model is renormalisable in four dimensions. Another example is the $\Phi^6_3$-model, where the overall UV degree of divergence of a graph with~$N$ external legs is $ \omega(\Gamma) =  6 - 2N$. The celebrated Weinberg--Zimmermann theorem states that, provided all free propagators have nonzero masses, the integral associated to the Feynman graph~$\Gamma$ is absolutely convergent if its overall UV degree of divergence and that of each of its 1PI subgraphs are strictly negative.

The classical BPHZ momentum space subtraction method, which will be largely the subject of this article from an algebraic point of view, is rooted in this assertion \cite{BoPa,CasKen,Chetyrkin,IZ,Vasilev2,Zavialov2,Zimmermann1,Zimmermann2}. At the one loop level, the essential idea is to redefine the integrand $I_\Gamma(p^\Gamma,k^\Gamma,s)$ of a divergent integral by subtracting the first $\omega(\Gamma)$ terms of its Taylor expansion in the external momenta~$p^\Gamma$ at an appropriate subtraction point. Beyond the one loop level, Zimmermann proved that such subtractions, performed on renormalisation parts, i.e.~integrands corresponding to 1PI UV divergent subgraphs of~$\Gamma$, lower the UV degrees of the integral and its subintegrals until they become all negative, and hence convergent. The proper combinatorics of these subgraph subtractions is encoded by Bogoliubov's $R$- respectively $\bar R$-operation, a recursion solved by Zimmermann's forest formula.\\


{\bf{Bogoliubov's $R$-operation}}. Let us briefly recall Bogoliubov's $R$-operation. With the goal to simplify the presentation, we refrain from addressing questions related to the proper choice of subtraction points in Taylor jets as well as IR divergencies, and assume to work with a theory which allows for subtractions around zero momentum.

Recall that, for mathematical consistency, one must first render the UV divergent Feynman integrals formally finite upon the introduction of new non-physical parameters \cite{CasKen}. For instance, one might truncate, i.e., cut-off the integration limits at an upper bound $\Lambda$. Evaluating such a regularised integral results in terms containing $\Lambda$ in such a way that naively removing the cut-off parameter must be avoided since it gives back the original divergences. In general, for prescribed values of these regularisation parameters, we retrieve the original ill-defined amplitude. In the following we implicitly assume some proper regularisation to be at work.

Bogoliubov's $R$-operation consists of an elaborate subtraction procedure to be applied to regularised Feynman amplitudes, such that upon removing the regularisation parameter, it gives physically sound and finite expressions. Since its inception a precise understanding of the functioning of the $R$-operation has been mandatory. See \cite{CasKen,Collins1,Collins2,IZ,Vasilev2,Zavialov1} for more details on the $R$-operation, including brief and instructive accounts. It was explored in detail especially by the Russian school. The textbooks by V.~A.~Smirnov \cite{Smirnov}, A.~N.~Vasil'ev \cite{Vasilev1} and O.~I.~Zavialov \cite{Zavialov2} as well as the 1991 preprint by K.~G.~Chetyrkin \cite{Chetyrkin} elaborate in depth on combinatorial aspects of Bogoliubov's $R$-operation.  

The set of external momenta of a given graph $\Gamma$ is written $p^\Gamma$, whereas a set of independent loop momenta for $\Gamma$ is denoted by $k^\Gamma$. The latter set depends on the parametrisation of the internal edges of the graph and is a priori not uniquely defined. However, no contradiction will arise from this ambiguity in our forthcoming reasonings \cite{IZ}. To simplify notation, we omit the coupling constants and other parameters and denote the amplitude $J_\Gamma(p^\Gamma,s;g)$, corresponding to the graph $\Gamma$,  simply by $\phi(\Gamma)$.

The notion of a spinney associated to a 1PI UV divergent Feynman graph $\Gamma$ encodes the main combinatorial structure used in Bogoliubov's $R$-operation. Following \cite{CasKen}, we recall that a Feynman graph $\Gamma$ is a connected collection of lines and vertices. It is called one-particle irreducible (1PI), if it cannot be disconnected by cutting one of its internal lines. In the following we assume all Feynman graphs to be 1PI. By $\gamma \subset \Gamma$ we mean a 1PI subgraph of $\Gamma$. 

\begin{wrapfigure}{r}{0.55\textwidth}
  \begin{center}
  \vspace{-2pt}
     \IgraphBoxB\;\;  \IgraphBoxA\;\; \IgraphBox
    \vspace{-9pt}
  \end{center}
\end{wrapfigure}

A proper 1PI subgraph is any 1PI subgraph except for $\Gamma$ itself. The picture on the right, showing a $2$-point graph in $\Phi^3_6$ theory with two proper subgraphs, may help to understand the notion of disjoint subgraphs, $\gamma_i \cap \gamma_j = \emptyset$, $\gamma_i,\gamma_j \subset\Gamma$. We may draw boxes around the 1PI subgraphs $\gamma_1$ and $\gamma_2$ of $\Gamma$. Then $\gamma_1 \cap \gamma_2 = \emptyset$ means essentially that we can box both graphs at the same time in such a way that the boxes only contain 1PI subgraphs, and are neither nested, nor do they touch or overlap. This for instance implies that we can not put both subgraphs into a single box, since this box would not contain a 1PI subgraph. For a more precise description, we refer to the standard references.

We call $S$ a proper spinney of $\Gamma$, if it consists of a nonempty union of disjoint proper 1PI subgraphs, $S=\{\gamma_1, \dots, \gamma_n\}$, $\gamma_i \subset \Gamma$, $\gamma_i \cap \gamma_j = \emptyset$ for $i \neq j$. Observe that the notion of disjointness from above resolves the notorious  problem of overlapping subgraphs. The graph $\Gamma$ in the above example has the following three proper spinneys $S_{1}:=\{\gamma_1\},S_{2}:=\{\gamma_2\},S_{12}:=\{\gamma_1,\gamma_2\}$ corresponding to the three possible boxings of the subgraphs. We call the union of all such spinneys a proper wood:
$$
	W(\Gamma):=\big\{S \subsetneq \Gamma\ \big{|}
 					S=\{\mathrm{union\ of\ disjoint\ proper\ \makebox{non-overlapp.}\ 1PI\ subgraphs\ of}\ \Gamma\} \big\}.
$$
Note that by definition $\Gamma \notin W(\Gamma)$. Let us enlarge this set to $W'(\Gamma):=W(\Gamma)\cup \{\Gamma\}$ which includes the graph $\Gamma$ itself. The graph $\Gamma$ above has the following proper wood:
$$
	W(\Gamma) = \big\{\{\gamma_1\},\{\gamma_2\},\{\gamma_1,\gamma_2\} \big\}.
$$

\begin{wrapfigure}{r}{0.19\textwidth}
  \begin{center}
    \vspace{-5pt}
	\IgraphB \\[0.3cm] \IgraphC
    \vspace{-5pt}
  \end{center}
\end{wrapfigure}
 
Reducing a subgraph $\gamma$ in a graph $\Gamma$, denoted $\Gamma/\gamma$, means contracting the subgraph to a point. This extends naturally to spinneys $S \subsetneq \Gamma$, that is, for $S \in W(\Gamma)$, $\Gamma/S$ denotes the graph following from $\Gamma$ with all the 1PI UV divergent subgraphs of $S$ contracted to points. As an example we look at the above graph $\Gamma$ and reduce its spinney $S_2$. This results in the reduced $2$-loop graph $\Gamma/S_{2}=\Gamma/\gamma_2$. Reducing spinney $S_{12}$ yields the $1$-loop graph $\Gamma/S_{12}$.
 
Recall that by the external structure of a graph we mean its external lines, i.e.~indicating the external momenta (and other properties). Observe that in general the external structure of the reduced graph $\Gamma/S$ coincides with the one of $\Gamma$, so that $p^\Gamma=p^{\Gamma/S}$ (the two graphs have the same external momenta). However, it is clear that $\Gamma$ and $\Gamma/S$ are different, i.e.~the latter is so to say smaller, since it has fewer subgraphs, and therefore fewer vertices and propagators than the former. For consistency, we will assume that the momentum parametrization of the internal edges of ${\Gamma/S}$ is inherited from the parametrization of the internal edges of $\Gamma$.

We now introduce Bogoliubov's $R$- and $\bar R$-operations in the context of momentum subtraction. Applied to an amplitude $\phi(\Gamma)$ they are related by:
\begin{equation}
\label{bogoRoperation}
	R(\Gamma):= (id - M_{p^\Gamma}^{(a(\Gamma))}) \bar{R}(\Gamma).
\end{equation}
Here we use a shorthand notation for $R(\Gamma)=R(\phi(\Gamma))$ and $\bar{R}(\Gamma)=\bar{R}(\phi(\Gamma))$. In fact, $\bar{R}(\Gamma)$ is a function of the external momenta $p^\Gamma=(p_1,\ldots,p_n)$ of the graph $\Gamma$. The map $M^{(a(\Gamma))}_{p^\Gamma}$ denotes the Taylor jet of order $a(\Gamma)$ in the variables $p^\Gamma$, i.e.~in the components of the vectors $p_i$ around zero. Hence, $M_p^{(k)}$ maps a function $f(p)$, where $p:=(p_1,\ldots,p_n)$, to its Taylor expansion up to order $k$, that is, to a polynomial of maximal degree $k$ in the components of the $p_i$. Therefore $R(\Gamma)$ is the Taylor expansion of $\bar{R}(\Gamma)$ starting at order $k+1$. Observe that in (\ref{bogoRoperation}) the subtraction degree $a(\Gamma)$ depends on the graph, and will be defined further below. 

The actual combinatorial intricacies lurk behind Bogoliubov's $\bar{R}$-operation, which is defined as follows.
\begin{equation}
\label{bogoRBARoperation}
	\bar{R}(\Gamma):= \phi(\Gamma) + \sum_{S \in W(\Gamma)} \prod_{\gamma \in S} C(\gamma)\phi(\Gamma/S).
\end{equation}
The sum goes over all proper spinneys $S$ in $\Gamma$. For more details we refer the reader to \cite{Zimmermann1,Zimmermann2} as well as Lowenstein's Maryland lectures \cite{Maryland}. The $\bar{R}$-operation essentially prepares a UV divergent Feynman amplitude $\phi(\Gamma)$ in such a way that the final subtraction of the first $a(\Gamma)$ terms of its Taylor expansion yields the renormalised, i.e.~finite, amplitude. This preparation involves Bogoliubov's counterterm map $C$, applied to proper subgraphs $\gamma \subset \Gamma$. It is defined by:
\begin{equation}
\label{bogoCounterterm}
	C(\gamma):=  - M_{p^\gamma}^{(a(\gamma))}\bar{R}(\gamma).
\end{equation}
Hence, by definition, the counterterm $C(\gamma)$ is a polynomial of degree at most $a(\gamma)$ in the external momenta ${p^\gamma}$ of the 1PI subgraph $\gamma$. However, observe that some external momenta of the subgraph $\gamma \subset \Gamma$ may be linear combinations of internal and external momenta of $\Gamma$. Therefore, $C(\gamma)\phi(\Gamma/S)$ in (\ref{bogoRBARoperation}) is a shorthand notation meaning that the overall integration over internal momenta of $\Gamma$ involved in the definition of $\phi$ has to be performed properly including the polynomial $C(\gamma)$. When kept properly in mind, this notational simplification does not lead to inconsistencies, see e.g.~\cite{IZ,Smirnov}. It is clear that the counterterm $C(\gamma)$ is recursively defined, as it includes Bogoliubov's $\bar{R}$-operation applied to subgraphs $\lambda \subset \gamma$. This recursion terminates in 1PI subgraphs $\lambda$ with no proper 1PI subgraph, for which $\bar{R}(\lambda)= \phi(\lambda)$.     
The counterterm of the full graph $\Gamma$ is just $C(\Gamma):=  - M_{p^\Gamma}^{(a(\Gamma))}\bar{R}(\Gamma).$

Note that Bogoliubov's $R$-operation applied to a graph $\Gamma$ crucially depends on the Taylor jet map $M_{p^\Gamma}^{(a(\Gamma))}$, which in turn comprises the choice of the subtraction point $p^\Gamma=q$ (recall that we assumed $q=0$) as well as the subtraction degree $a(\Gamma)$. In general, the choice of the subtraction point is one of the key issues in the renormalisation process, as it is related to physically meaningful quantities. 

Recall the simple identities for the composition of Taylor jets  $M_p^{(k)}$ corresponding to a set of independent variables $p:=(p_1,\ldots,p_m)$ \cite{Smirnov}:
$$
	M_p^{(k)}\prod_{i}(id-M_{p^i}^{(k_i)}) = 0, \qquad k \leq \sum_{i}(k_i+1) - 1
$$
and
$$
	M_p^{(k)}\prod_{i}M_{p^i}^{(k_i)} = \prod_{i}M_{p^i}^{(k_i)}, \qquad k \geq \sum_{i}k_i.
$$
Here $M_{p^i}^{(k_i)}$ is the Taylor jet up to order $k_i$ corresponding to a subset $p^i \subset p$. For $p^i,p^j \subset p$ and $p^i\cap p^j = \emptyset$ one deduces from the above identities applied to the product $f(p^i)g(p^j)$:
\begin{equation}
\label{eq:key}
	M_{p^i}^{(k_i)}(f(p^i))M_{p^j}^{(k_j)}(g(p^j))
	=M_{p}^{(k_i+k_j)} \big( M_{p^i}^{(k_i)}(f(p^i))g(p^j) + f(p^i)M_{p^j}^{(k_j)}(g(p^j)) - f(p^i)g(p^j)\big).
\end{equation}

\smallskip

\begin{remark} 
This Rota--Baxter type identity was mentioned in \cite{EFGP}, where we dubbed a family of maps $M_{p}^{(k)}$ satisfying (\ref{eq:key}) a Rota--Baxter family in the context of the BPHZ method. The important point here is that $p$ is an independent set of variables and that $p^i,p^j \subset p$. For $q'$ and $q''$ two sets of dependent variables (think of momenta associated to the edges of a Feynman diagram, which may be related due to momentum conservation at vertices) in general we do not have the above Rota--Baxter type identity at hand for the product $M_{q'}^{(k')}(f(q'))M_{q''}^{(k'')}(g(q''))$.  
\end{remark}

\smallskip

For a 1PI UV divergent one loop graph $\Gamma$ we have $W'(\Gamma) =\{\Gamma\}$ as there are no proper spinneys. Therefore:
$$
	\bar{R}(\Gamma)= \phi(\Gamma),
$$
so that:
$$
	R(\Gamma)= (id - M^{(a(\Gamma))}_{p^\Gamma})\phi(\Gamma)
	\quad{\rm{and}}\quad
	C(\Gamma)=  - M_{p^\Gamma}^{(a(\Gamma))}\phi(\Gamma).
$$
Hence, at the one loop level, the renormalised amplitude results from a naive subtraction of the first $a(\Gamma)$ terms in the Taylor expansion of $\phi(\Gamma)$ in the external momenta of $\Gamma$. The natural, or minimal, choice for the number of terms to be discarded in this Taylor expansion is the overall degree of divergence of $\Gamma$, i.e.~$a(\Gamma)=\omega(\Gamma)$. As an example beyond one loop, we look again at the graph  $\Gamma= \!\!\!\begin{array}{c} \\[-0.4cm]\IgraphGamma \end{array}\vspace{-.1cm}$ in $\Phi^6_3$-theory. For convenience we index the upper 1PI $1$-loop subgraph $\gamma_2$ and the lower one by $\gamma_1$. Recall that it has the following wood $W(\Gamma)=\{ S_1:=\{\!\!\begin{array}{c} \\[-0.4cm] \IIpointiloop_1 \end{array}\!\!\},S_2:=\{\!\!\begin{array}{c} \\[-0.4cm] \IIpointiloop_2 \end{array}\!\!\},S_{12}:=\{\!\!\begin{array}{c} \\[-0.4cm] \IIpointiloop_1 \end{array}\!\!\!,\!\!\!\begin{array}{c} \\[-0.4cm] \IIpointiloop_2 \end{array}\!\!\} \}$, and therefore:
\begin{eqnarray*}
	\bar{R}(\Gamma) &=& \phi(\Gamma) 
					+ C(\!\!\begin{array}{c} \\[-0.6cm] \IIpointiloop_1 \end{array}\!\!)\phi(\Gamma/S_1)
					+ C(\!\!\begin{array}{c} \\[-0.6cm] \IIpointiloop_2 \end{array}\!\!)\phi(\Gamma/S_2)
					+C(\!\!\begin{array}{c} \\[-0.6cm] \IIpointiloop_1 \end{array}\!\!)
					    C(\!\!\begin{array}{c} \\[-0.6cm] \IIpointiloop_2 \end{array}\!\!)\phi(\Gamma/S_{12})\\
				  &=& \phi(\Gamma) 
				  	+ C(\!\!\begin{array}{c} \\[-0.6cm] \IIpointiloop_1 \end{array}\!\!)
					      \phi(\!\!\begin{array}{c} \\[-0.6cm]  \IIpointiiloop \end{array}\!\!)
					+ C(\!\!\begin{array}{c} \\[-0.6cm] \IIpointiloop_2 \end{array}\!\!)
					      \phi(\!\!\begin{array}{c} \\[-0.6cm]  \IIpointiiloop \end{array}\!\!)
					 + C(\!\!\begin{array}{c} \\[-0.6cm] \IIpointiloop_1 \end{array}\!\!)
					    C(\!\!\begin{array}{c} \\[-0.6cm] \IIpointiloop_2 \end{array}\!\!)
					      \phi(\!\!\begin{array}{c} \\[-0.6cm] \IIpointiloop \end{array}\!\!),	 
\end{eqnarray*}
where for $i=1,2$:
$$
	C(\!\!\begin{array}{c} \\[-0.6cm] \IIpointiloop_i \end{array}\!\!) = 
	- M_{p^{{\scalebox{0.5}{\IIpointiloop}}_i}}^{(a(\!\!\!\begin{array}{c} \\[-0.6cm] 
	{\scalebox{0.5}{\IIpointiloop}}_i \end{array}\!\!\!))}
	\phi(\!\!\begin{array}{c} \\[-0.6cm] \IIpointiloop_i \end{array}\!\!).
$$
This yields:
\begin{eqnarray*}
	\lefteqn{R(\!\!\!\begin{array}{c} \\[-0.4cm]\IgraphGamma \end{array}\!\!\!) = 
	(id - M_{p^{\scalebox{0.3}{\IgraphGamma}}}^{(a(\!\!\!\begin{array}{c} \\[-0.3cm] {\scalebox{0.5}{\IgraphGamma}} \end{array}\!\!\!))}) \bar{R}( \!\!\begin{array}{c} \\[-0.4cm]\IgraphGamma \end{array}\!\!) =
	 \bar{R}( \!\!\!\begin{array}{c} \\[-0.4cm]\IgraphGamma \end{array}\!\!\!)  
	 + C(\!\!\!\begin{array}{c} \\[-0.4cm]\IgraphGamma \end{array}\!\!\!)}\\
		            &=\!\!\!\!& (id - M_{p^{\scalebox{0.3}{\IgraphGamma}}}^{(a(\!\!\!\begin{array}{c} \\[-0.3cm] {\scalebox{0.5}{\IgraphGamma}} \end{array}\!\!\!))})
		            \big(   \phi(\Gamma) + C(\!\!\begin{array}{c} \\[-0.6cm] \scalebox{0.8}{\IIpointiloop}_1 \end{array}\!\!)\phi(\!\!\begin{array}{c} \\[-0.6cm]  \IIpointiiloop \end{array}\!\!)
				 + C(\!\!\begin{array}{c} \\[-0.6cm] \scalebox{0.8}{\IIpointiloop}_2 \end{array}\!\!)\phi(\!\!\begin{array}{c} \\[-0.6cm]  \IIpointiiloop \end{array}\!\!)
				 + C(\!\!\begin{array}{c} \\[-0.6cm] \scalebox{0.8}{\IIpointiloop}_1 \end{array}\!\!)
				    C(\!\!\begin{array}{c} \\[-0.6cm] \scalebox{0.8}{\IIpointiloop}_2 \end{array}\!\!)\phi(\!\!\begin{array}{c} \\[-0.6cm] \IIpointiloop \end{array}\!\!)\big).	
\end{eqnarray*}
By looking at the fine print of the above example, i.e.~by taking the momentum flow into account, we observe that the external momenta of the subgraphs are not independent a priori. Indeed, for the graph $\Gamma$ the external momenta is $p^{\scalebox{0.3}{\IgraphGamma}}=p$. Choosing a momentum flow inside $\Gamma$, and keeping momentum conservation at each vertex in mind, the two proper 1PI one loop subgraphs have the following external momenta, $p^{{\scalebox{0.5}{\IIpointiloop}_1}} = k$ for, say the lower subgraph, and $p^{{\scalebox{0.5}{\IIpointiloop}}_2} = p-k$ for the other one. Hence, the external momenta are not independent. The effects of this dependency on the algebraic structure of momentum subtraction schemes have to be considered in the light of foregoing remark, and of the identity (\ref{eq:key}).   

\medskip

{\bf{The subtraction degree $a$}}. In our Hopf algebraic approach the subtraction degree $a$ as a function on graphs, plays an important role. Following, e.g. \cite{Chetyrkin, Maryland, Smirnov, Zavialov1, Zimmermann1,Zimmermann2}, we introduce now the notion of oversubtractions. It appears in the definition of the Taylor jet $M^{(a(\Gamma))}_\Gamma\phi(\Gamma)$ and indicates the order up to which we may expand the Taylor jet. In general $a(\Gamma) \ge \omega(\Gamma)$. The following theorem for Bogoliubov's $R$-operation then holds. See e.g. \cite{Chetyrkin,Smirnov}.

\begin{thm}\label{thm:oversubtraction}
The renormalised Feynman amplitude $R(\Gamma)$ is finite if the subtraction degree $a(\gamma)$ satisfies:
\begin{equation}
 \label{oversub}
	a(\gamma) \ge  \omega(\gamma) + \sum_{\gamma_i \in S} \big(a(\gamma_i) - \omega(\gamma_i)\big)
 \end{equation}
for any proper subgraph $\gamma \in \Gamma$ and any spinney $S \in W(\gamma)$.
\end{thm}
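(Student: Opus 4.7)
The strategy is to reduce (\ref{oversub}) to the classical Weinberg power-counting theorem by tracking how Taylor subtractions modify superficial degrees of divergence. I would proceed by induction on the loop number $|\Gamma|$. The base case $|\Gamma|=1$ is immediate: here $W(\Gamma)=\emptyset$, so $\bar R(\Gamma)=\phi(\Gamma)$, and the integrand of $R(\Gamma)=(\id-M_{p^\Gamma}^{(a(\Gamma))})\phi(\Gamma)$ has superficial UV degree $\omega(\Gamma)-a(\Gamma)-1\le -1$ by hypothesis; absolute convergence then follows from Weinberg's theorem.

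For the inductive step I would expand $R(\Gamma)$ using (\ref{bogoRoperation})--(\ref{bogoCounterterm}) as a sum over $S\in W'(\Gamma)$ of terms of the shape
$$
(\id-M_{p^\Gamma}^{(a(\Gamma))})\Big(\prod_{\gamma_i\in S}C(\gamma_i)\Big)\phi(\Gamma/S).
$$
Each $C(\gamma_i)$ is, by construction, a polynomial of degree $a(\gamma_i)$ in $p^{\gamma_i}$, and these momenta are linear combinations of the independent loop momenta of $\Gamma$ together with $p^\Gamma$. Under a uniform rescaling $k\mapsto\lambda k$ of the loop momenta of $\Gamma$, $C(\gamma_i)$ therefore scales as $\lambda^{a(\gamma_i)}$, whereas a non-subtracted $\gamma_i$-integrand would have scaled as $\lambda^{\omega(\gamma_i)}$. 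Consequently the superficial UV degree of the integrand of $\prod_i C(\gamma_i)\phi(\Gamma/S)$, regarded as an integrand for $\Gamma$, equals
$$
\omega(\Gamma)+\sum_{\gamma_i\in S}\bigl(a(\gamma_i)-\omega(\gamma_i)\bigr),
$$
and the outer operation $(\id-M_{p^\Gamma}^{(a(\Gamma))})$ lowers this degree by $a(\Gamma)+1$. Applying (\ref{oversub}) to $\gamma=\Gamma$ and the spinney $S$ then produces a strictly negative effective degree for the outer integration.

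To invoke Weinberg's theorem it remains to check that every proper 1PI subintegral of each summand also has strictly negative degree of divergence. This I would obtain by applying the inductive hypothesis to each $\gamma_i$ (whose subtraction degree $a(\gamma_i)$ satisfies (\ref{oversub}) restricted to subgraphs of $\gamma_i$) and by combining these with the bounds (\ref{oversub}) applied to the subgraphs of $\Gamma/S$; any 1PI subintegral of $\prod_i C(\gamma_i)\phi(\Gamma/S)$ corresponds combinatorially to a 1PI subgraph of $\Gamma$, and the same oversubtraction estimate controls its effective degree. The main obstacle, and the classical subtlety of the Bogoliubov recursion, is that the external momenta $p^{\gamma_i}$ are in general not independent among themselves nor from $p^\Gamma$. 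Handling this consistently requires fixing once and for all a set of independent loop momenta for $\Gamma$ and using the Rota--Baxter-type identity (\ref{eq:key}) on the relevant tuples of independent variables, in the spirit of the detailed analyses of \cite{Chetyrkin,Smirnov}.
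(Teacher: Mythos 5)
A preliminary remark: the paper does not prove Theorem \ref{thm:oversubtraction} at all --- it is quoted from the classical literature (see the sentence preceding the statement, with references to \cite{Chetyrkin,Smirnov}), so your proposal must be measured against the Bogoliubov--Parasiuk--Hepp--Zimmermann convergence proofs rather than against an argument in the text. Your degree bookkeeping for a single spinney term is correct and does explain why (\ref{oversub}) is the natural numerical condition: inserting counterterms of degree $a(\gamma_i)$ in place of subamplitudes scaling like $\omega(\gamma_i)$ raises the effective overall degree to $\omega(\Gamma)+\sum_i\big(a(\gamma_i)-\omega(\gamma_i)\big)$, and the outer subtraction must dominate it (note that for this you need (\ref{oversub}) also for $\gamma=\Gamma$ itself, which you use implicitly although the statement speaks of proper subgraphs).

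The genuine gap is the step where you propose to apply Weinberg's theorem summand by summand over $S\in W'(\Gamma)$. That reduction fails for any $\Gamma$ with subdivergences. The term with the empty spinney, $(\id-M^{(a(\Gamma))}_{p^\Gamma})\phi(\Gamma)$, still contains every subdivergence of $\Gamma$: the outer Taylor subtraction acts only in the external momenta $p^\Gamma$ and does not touch the subintegral over the loop momenta of a divergent proper subgraph $\gamma$, whose degree remains $\omega(\gamma)\ge 0$; likewise each term $\prod_i C(\gamma_i)\,\phi(\Gamma/S)$ diverges by itself once the regulator is removed, because the coefficients of $C(\gamma_i)$ do. Hence the claim that ``every proper 1PI subintegral of each summand has strictly negative degree'' is false term by term, and no inductive hypothesis on the $\gamma_i$ or on $\Gamma/S$ can repair an individual summand: the subtraction attached to a subgraph $\delta$ sits in the spinneys containing $\delta$ and only cancels the subdivergence of the terms whose spinneys omit $\delta$ after the terms are recombined. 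This is exactly why the actual proofs do not proceed via a term-by-term Weinberg estimate, but keep the whole $\bar R$-prepared expression at the level of a single integrand and establish improved power counting for the full spinney/forest sum region by region in loop-momentum space (Hepp sectors, or Zimmermann's estimates on the forest formula), which is the analytic content of \cite{BoPa,Zimmermann1} and of the accounts in \cite{Chetyrkin,Smirnov}. The dependence of the $p^{\gamma_i}$ on internal momenta of $\Gamma$, which you flag and propose to treat with (\ref{eq:key}), is by comparison a secondary bookkeeping issue; the missing ingredient is the cancellation mechanism across spinneys.
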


The minimal solution to this constraint is given by defining $a(\gamma) = \omega(\gamma)$, corresponding to the case of {\it{minimal momentum subtraction}}. 

Another simple solution to (\ref{oversub}) is given by choosing for $\gamma$ any subgraph of $\Gamma$, including $\gamma=\Gamma$:
\begin{equation}
 \label{oversubTOP}
	\bar{a}(\gamma) = \omega(\gamma) + \sum_{{\gamma' \subset \gamma} \atop \gamma' \neq \gamma} \omega(\gamma'),
\end{equation}
where the sum goes over all proper connected 1PI subgraphs ${\gamma' \subset \gamma}$. Equivalently (recall that we work with power-counting renormalisable theories, i.e.~$\omega(\Gamma)=\omega(\Gamma/S)$ for any $S \in W(\Gamma)$), choosing any 1PI UV divergent proper subgraph $\gamma \subset \Gamma$:
$$
	\bar{a}(\Gamma)=\bar{a}(\Gamma/\gamma)+\bar{a}(\gamma),
$$
with $\bar{a}(\gamma)=\omega(\gamma)$ if $\gamma$ does not contain proper subgraphs, that is, if it is a 1PI UV divergent primitive graph. 
For instance, in the $\Phi_6^3$-model we find:
\allowdisplaybreaks{
\begin{eqnarray*}
 \bar{a}\big(\!\!\!\begin{array}{c} \\[-0.6cm]
                            \IIpointiloop
                             \end{array}\!\!\!\big) &=&  
                             \omega\big(\!\!\!\begin{array}{c} \\[-0.6cm]
                               \IIpointiloop
                             \end{array}\!\!\!\big)=2\\
 \bar{a}\big(\!\!\!\begin{array}{c} \\[-0.6cm]
                            \IIpointiiloop
                             \end{array}\!\!\!\big) &=&  
                             \omega\big(\!\!\!\begin{array}{c} \\[-0.6cm]
                             \IIpointiiloop
                             \end{array}\!\!\!\big)
                             + \omega\big(\!\!\!\begin{array}{c} \\[-0.6cm]
                             \IIpointiloop
                             \end{array}\!\!\!\big) = 4\\
\bar{a}\big(\!\!\!\begin{array}{c} \\[-0.5cm]
                            \IIpointiiiloop
                             \end{array}\!\!\!\big) &=&  
                             \omega\big(\!\!\!\begin{array}{c} \\[-0.5cm]
                             \IIpointiiiloop
                             \end{array}\!\!\!\big)
                             + 2\omega\big(\!\!\!\begin{array}{c} \\[-0.6cm]
                             \IIpointiloop
                             \end{array}\!\!\!\big) = 6                             
\end{eqnarray*}}
and
\allowdisplaybreaks{
\begin{eqnarray*}
 \bar{a}\Big(\!\!\!\begin{array}{c} \\[-0.4cm]
                            \IIIpointiloop
                             \end{array}\!\!\Big) &=&  
                             \omega\Big(\!\!\!\begin{array}{c} \\[-0.4cm]
                               \IIIpointiloop
                             \end{array}\!\!\Big)=0\\
 \bar{a}\Big(\!\!\!\begin{array}{c} \\[-0.4cm]
                            \IIIpointiiloopA
                             \end{array}\!\!\Big) &=&  
                             \omega\Big(\!\!\!\begin{array}{c} \\[-0.4cm]
                             \IIIpointiiloopA
                             \end{array}\!\!\Big)
                             + \omega\Big(\!\!\!\begin{array}{c} \\[-0.4cm]
                             \IIIpointiloop
                             \end{array}\!\!\Big) = 0\\
\bar{a}\Big(\!\!\!\begin{array}{c} \\[-0.4cm]
                            \IIIpointiiloopB
                             \end{array}\!\!\Big) &=&  
                             \omega\Big(\!\!\!\begin{array}{c} \\[-0.4cm]
                             \IIIpointiiloopB
                             \end{array}\!\!\Big)
                             + \omega\big(\!\!\!\begin{array}{c} \\[-0.5cm]
                             \IIpointiloop
                             \end{array}\!\!\!\big) = 2.                            
\end{eqnarray*}}
In the $\Phi_4^4$-model we find:
\allowdisplaybreaks{
\begin{eqnarray*}
 \bar{a}\big(\!\!\!\begin{array}{c} \\[-0.5cm]
                            \IVpointiloop
                             \end{array}\!\!\!\big) &=&  
                             \omega\big(\!\!\!\begin{array}{c} \\[-0.5cm]
                               \IVpointiloop
                             \end{array}\!\!\!\big)=0\\
 \bar{a}\Big(\!\!\!\begin{array}{c} \\[-0.4cm]
                            \IVpointiiloop
                             \end{array}\!\!\Big) &=&  
                             \omega\Big(\!\!\!\begin{array}{c} \\[-0.4cm]
                             \IVpointiiloop
                             \end{array}\!\!\Big)
                             + \omega\big(\!\!\!\begin{array}{c} \\[-0.5cm]
                             \IVpointiloop
                             \end{array}\!\!\!\big) = 0.
\end{eqnarray*}}
We call the case (\ref{oversubTOP}) \it critical oversubtraction\rm . This case is encountered frequently in the classical literature on the foundations of QFT, see e.g.~Zavialov \cite{Zavialov1}.


\section{Subtraction schemes}
\label{sect:schemes}

In this section we intend to propose a tentative algebraic classification of subtraction schemes. This classification is based on the exponential method of renormalization \cite{EFP}, of which we will present a modified version in the next section. 

Recall that $F$ denotes the set of 1PI UV divergent graphs in $H$. For each $\Gamma\in F$, we write $A_\Gamma$ for the (commutative, unital) target algebra of regularised Feynman rules acting on $\Gamma$. Typically, this is an algebra of functions depending on the external momenta of $\Gamma$, the parameters of the underlying QFT, e.g.~coupling constants, etc., as well as on the regularisation. See equation (\ref{eq:stone-of-contention}) for the typical structure of elements of $A_\Gamma$. Furthermore, we assume these functions to behave as polynomial or formal power series in the external momenta. 

\begin{defn} \label{def:SubtractionSchemes}
A {\rm{subtraction scheme}} $\mathcal S$ is a family of linear projectors $P_-^\Gamma$ on $A_\Gamma$, where $\Gamma$ runs over $F$. We write $P_+^\Gamma$ for the projector $id-P_-^\Gamma$. For any $\Gamma \in F$, spinney $S\in W(\Gamma)$, and $x_\Gamma \in A_\Gamma$ as well as any family $(x_\gamma)_{\gamma\in S}$ of elements of the $A_\gamma$:

\begin{enumerate}
\item[i)]
	the subtraction scheme $\mathcal S$ is called of {\rm{counterterm type}} (CT) if and only if:
\begin{equation}
\label{CT}
{\rm{CT}}:\qquad\;\;
	P_-^\Gamma\Big(\big(\prod_{\gamma\in S}P_-^\gamma(x_\gamma)\big)P_-^{\Gamma/S}(x_\Gamma)\Big)
	= \big(\prod_{\gamma\in S}P_-^\gamma(x_\gamma)\big)P_-^{\Gamma/S}(x_\Gamma)
\end{equation}

\item[ii)]
	the subtraction scheme $\mathcal S$ is called of {\rm{regular type}} (RT) if and only if:
\begin{equation}
\label{RT}
{\rm{RT}}:\qquad\;\;
	P_+^\Gamma \Big(\big(\prod_{\gamma\in S}P_+^\gamma(x_\gamma)\big)P_+^{\Gamma/S}(x_\Gamma)\Big)
	= \big(\prod_{\gamma\in S}P_+^\gamma(x_\gamma)\big)P_+^{\Gamma/S}(x_\Gamma)
\end{equation}

\item[iii)]
	and it is called of {\rm{symmetric type}} (ST) if and only if both equation {\rm{CT}}  (\ref{CT}) and equation {\rm{RT}} (\ref{RT}) hold true.
\end{enumerate}
\end{defn}

In these formulas, we used the usual shorthand notation in Bogoliubov formula : an overall integration over internal momenta of $\Gamma$ has to be performed to insure that products such as $(\big(\prod_{\gamma\in S}P_+^\gamma(x_\gamma)\big)P_+^{\Gamma/S}(x)$ do belong to $A_\Gamma$ and do not depend functionally on the internal momenta of $\Gamma$.   

Two remarks are in order. First, recall that the use of projectors reflects the idea to isolate the divergences of regularised amplitudes. Repeated application of these projectors leaves the result invariant. In fact, subtraction schemes may be characterized in terms of projectors in an algebra. Second, in the context of momentum subtraction the key point of the above classification lies in the subtraction degree. To further motivate this definition, we present three examples below. 

\begin{exam}{\rm{
In DR+MS the target space $A_\Gamma[\varepsilon^{-1},\varepsilon]]$ is an algebra of Laurent series over the perturbation parameter $\varepsilon$; the projection $P_-^\Gamma$ maps into $\varepsilon A_\Gamma[\varepsilon^{-1}]$, orthogonally to the image of $P_+^\Gamma $ in $A_\Gamma[[\varepsilon]]$. The dependence on the external momenta of a graph $\Gamma$ is encoded in the coefficients of the Laurent series and the very structure of the projection map $P_-^\Gamma$ does not depend in the end on $\Gamma$ --it is therefore usually written simply $P_-$, without reference to $\Gamma$ or to the particular algebra of regularized amplitudes on which it acts. Finally, $P_-$ and $P_+:=id-P_-$ project on two disjoint subalgebras (see e.g. \cite{Collins1}). Therefore DR+MS is a proper ST scheme.}}
\end{exam}

In fact, DR+MS is a particular case of the class of schemes where the projection map $P_-^\Gamma=P_-$ is graph independent (analytic renormalisation belongs to this class as well, for example). In that case, the scheme is CT if the image of $P_-$ is a subalgebra, it is RT if the image of $P_+=id-P_-$ is a subalgebra, and it is ST if both the images of $P_-$ and $P_+$ are subalgebras. This last condition ensures that the scheme is Rota--Baxter (RB). That is, $P_\pm$ are linear maps on the algebra $A$, and satisfy the (weight minus one) Rota--Baxter relation:
$$
	P_\pm(x)P_\pm(y) = P_\pm(xP_\pm(y)) + P_\pm(P_\pm(x)y) - P_\pm(xy)
$$
for any $x,y$ in $A$. In this case, as the RB structure can be lifted naturally to the dual of the Hopf algebra $H$, the renormalisation process amounts to a BWH decomposition in the group of characters of $H$, as was shown in \cite{CKI}. See also \cite{EFManchon}.

\ \par

Physical results must be independent of the particular choice of renormalisation scheme. Hence, final answers following from different schemes are supposed to agree upon finite renormalisation. The latter comprise finite corrections which, in most cases, are necessary to ensure that the physical quantities (scattering amplitudes, etc.) computed from the renormalised Green functions are the right ones. This is ensured by normalisation conditions involving the external momenta of the graphs, see e.g. \cite[Sect. 8.2]{IZ}. We focus now on the classical BPHZ method, involving Taylor jet subtractions, as described above.

\begin{exam}{\rm{
Let us consider the case where we have the projector $P_-^\gamma=M_{p^\gamma}^{(\omega(\gamma))}$, i.e.~Taylor jet subtraction up to the overall degree of divergence $\omega(\gamma)$ in the external momenta of the graph $\gamma$. In that case, the equation:
$$
	P_-^\Gamma\Big(\big(\prod_{\gamma\in S}P_-^\gamma(x_\gamma)\big)P_-^{\Gamma/S}(x_\Gamma)\Big)
	= \big(\prod_{\gamma\in S}P_-^\gamma(x_\gamma)\big)P_-^{\Gamma/S}(x_\Gamma)
$$
does not hold true in general. Indeed, one checks that the left hand side is at most of degree $\omega(\Gamma)$ in the external momenta of $\Gamma$, whereas the corresponding degree on the right hand side can be arbitrary, depending on the overall degrees of divergences $\omega(\gamma)$ for $\gamma \in S \in W(\Gamma)$.  

On the other hand,  the operation $P_+^\Gamma(x_\Gamma)$, where $x_\Gamma$ is considered as a function in the external momenta of $\Gamma$, subtracts the terms of the Taylor expansion of $x_\Gamma$ up to the degree $\omega(\Gamma)$. In particular, $P_+^\Gamma$ acts as the identity on $x_\Gamma$ if this Taylor expansion reduces to $0$. Hence, since $\omega(\Gamma)=\omega(\Gamma/S)$ and $M_{p^\Gamma}^{(\omega(\Gamma))}P_+^{\Gamma/S}(x_\Gamma)=0$, the Taylor expansion up to order $\omega(\Gamma)$ in the external momenta of $\Gamma$ of $P_+^{\Gamma/S}(x_\Gamma)$, is zero, it follows that:
$$
	P_+^\Gamma \Big(\big(\prod_{\gamma\in S}P_+^\gamma(x_\gamma)\big)P_+^{\Gamma/S}(x_\Gamma)\Big)
	= \big(\prod_{\gamma\in S}P_+^\gamma(x_\gamma)\big)P_+^{\Gamma/S}(x_\Gamma).
$$
In particular, the BPHZ renormalisation method with Taylor jet subtractions up to the overall degree of divergence is a RT scheme. Note that this observation is not related to identity (\ref{eq:key}). In other words, that this fundamental scheme is not ST and, in particular, not RB.}}
\end{exam}

\begin{exam} {\rm{
Let us consider now the case where the projector $P_-^\gamma$ is $M_{p^\gamma}^{({\bar a}(\gamma))}$, that is, Taylor jet subtraction in the external momenta of $\gamma$ up to the critical oversubtraction degree. In that case, the equation:
$$
	P_-^\Gamma\Big(\big(\prod_{\gamma\in S}P_-^\gamma(x_\gamma)\big)P_-^{\Gamma/S}(x_\Gamma)\Big)
	= \big(\prod_{\gamma\in S}P_-^\gamma(x_\gamma)\big)P_-^{\Gamma/S}(x_\Gamma)
$$
holds. Indeed, by definition of the critical oversubtraction degree, the two sides are of degree at most ${\bar a}(\Gamma)$ in the external momenta of $\Gamma$. On the other hand, the equation: 
$$
	P_+^\Gamma\Big(\big(\prod_{\gamma\in S}P_+^\gamma(x_\gamma)\big)P_+^{\Gamma/S}(x_\Gamma)\Big)
	= \big(\prod_{\gamma\in S}P_+^\gamma(x_\gamma)\big)P_+^{\Gamma/S}(x_\Gamma)
$$
does not hold. Therefore, the BPHZ method with Taylor jet subtractions up to the critical oversubtraction degree is a proper CT scheme. Note again that this fact is independent of particular identity (\ref{eq:key}).}} 
\end{exam}

We would like to postpone a further development of these examples. However, note that, by assuming the subtraction degree $a(\Gamma)$ to be such that for a graph $\Gamma$ and its spinneys $S \in W(\Gamma)$: 
$$
	a(\Gamma) > \sum_{\gamma\in S} a(\gamma) + a(\Gamma/S),
$$ 
we would find a CT scheme. 

Notice also that some physical subtraction schemes may not enter our classification approach. The reason for this is that, in practice, several constraints have to be taken into account properly when performing renormalisation. One may wish, for example, to use physical masses (which creates specific requirements on the subtraction points in momentum subtraction schemes), or, one may also have to take into account the existence of IR divergences. Even for such fundamental theories as quantum electrodynamics at low loop orders, these requirements create severe constraints. Therefore, at the moment, we consider this work as a necessary first step, and postpone a more elaborate study of these intricate scheme-dependent phenomena within the here promoted algebraic picture of renormalisation.


\section{Exponential Renormalisation}
\label{sect:ExpRen}

The exponential method was described in detail in \cite{EFP}. In the following we present a slightly refined picture of it, well adapted to renormalisation via momentum subtraction schemes. We assume once again that the quantum field theory under consideration is renormalisable, which implies, among other things, that for any spinney $S \in W(\Gamma)$ the overall degree of divergence of the graph $\Gamma/S$ is equal to the overall degree of divergence of $\Gamma$. 

Integrating over internal momenta of a graph defines a map $\int_\Gamma$ from tensor products $A_{\gamma_1}\otimes \ldots \otimes A_{\gamma_n}\otimes A_{\Gamma/S}$ to $A_\Gamma$, where $S=\{\gamma_1,\ldots,\gamma_n\} \in W(\Gamma)$ is an arbitrary spinney.
%
%
We deliberately avoid discussing the analytical construction of the function spaces $A_\Gamma$ and of the integration map which is not relevant for our purposes, and refer e.g.~to \cite{Zavialov2}.

Recall now that the set $F$ of 1PI UV divergent graphs of a given renormalisable quantum field theory generates a polynomial algebra $H$ that carries naturally the structure of a graded connected Hopf algebra, i.e.~$H=\bigoplus_{n \geq 0} H_n$, over the field $k$. Connectedness refers to $H_0 = k$. The unit in $H$ is denoted by $\un$. The commutative product in $H$ is simple disjoint union of graphs, denoted by concatenation. See e.g.~\cite{FGB,Manchon} for more details. The grading is given by the number of loops in graphs; we write  $\pi_{(n)}$ for the projection of $H$ to $H_n$, orthogonal to the other graded components of $H$. For $\phi$ a linear map on $H$, we write $\phi_{(n)}$ for $\phi \circ \pi_{(n)}$.
%
%
The coproduct $\Delta : H \to H \otimes H$ is given by:
$$
   \Delta(\Gamma) := \Gamma \otimes \un + \un \otimes \Gamma 
   + \sum_{S \in W(\Gamma)} \prod_{\gamma \in S} \gamma \otimes \Gamma/S,
$$
where the sum runs over all non-empty spinneys \cite{CKI}. This coproduct is coassociative $(\Delta\otimes Id_H)\circ\Delta =(Id_H\otimes \Delta)\circ \Delta$, and can be iterated to a map $\Delta^{[k]}$ from $H$ to $H^{\otimes k}$. The coassociativity property insures that this map does not depend on the order in which the coproduct operations are performed so that we can choose for example $\Delta^{[k]}:=(\Delta^{[k-1]}\otimes Id_H)\otimes \Delta$.
The action of $\Delta^{[k]}$ on $\Gamma$ can be expanded as a sum of tensor products of graphs and cographs of length at most $k+1$. 


We call from now on $F$-adapted linear form on $H$ a family of linear maps from $\CC \cdot \Gamma_1 \ldots \Gamma_n$ to $(A_{\Gamma_1}\otimes \ldots \otimes A_{\Gamma_n})_{S_n}$. Here $S_n$ is the symmetric group of $n$ elements. We write $\phi \in Lin_F(H)$ for such a family. The convolution product of two $F$-adapted linear forms $\phi$ and $\psi$ on $H$ is the $F$-adapted linear form defined by: 
$$
	(\phi \ast \psi) (\Gamma_1 \ldots \Gamma_n) 
	:=\langle \phi \otimes \psi,\Delta(\Gamma_1 \ldots \Gamma_n)\rangle 
	:= \int_{\Gamma_1} \ldots \int_{\Gamma_n}
	\phi(\Gamma_{1}^{(1)} \ldots \Gamma_{n}^{(1)})
	\psi(\Gamma_{1}^{(2)} \ldots \Gamma_{n}^{(2)}),
$$ 
where we use the Sweedler notation for the coproduct, i.e.~$\Delta(h)=h^{(1)} \otimes h^{(2)}$ and Fubini's theorem according to which the integrations over disjoint sets of internal momenta commute. Excepted for keeping track of the target spaces of Feynman rules, this construction coincides with the convolution product of \cite{CKII}, to which we refer for further details. For notational simplicity, as in the Bogoliubov formula, we do not keep track of the integrations and write simply from now on $\phi(\Gamma_1^{(1)} \ldots \Gamma_n^{(1)})\psi(\Gamma_1^{(2)} \ldots \Gamma_n^{(2)})$ for the right-hand side of the previous equation.

The convolution product endows the space $Lin_F(H)$ naturally with an unital algebra structure. A character $\phi$ on $H$ is a multiplicative unital $F$-adapted map on $H$. That is, $\phi(hg)=\phi(h)\phi(g)$, $\phi(\un)=1$. The set of characters $G$ forms a group for the convolution product. Its unit, written $e$, is the projection on $H_0=k$ orthogonally to the $H_i$, $i \geq 1$. Recall that the inverse in $G$ is given by composition with the Hopf algebra antipode.

Let us briefly give a heuristic argument for this Hopf algebraic framework in the light of Bogoliubov's classical recursion. One may use the convolution product to disentangle Bogoliubov's formulae. Indeed, if the counterterm is a character, the recursion rewrites:
 \begin{eqnarray*}
	R(\Gamma)	&=&(id-T)\big(\phi(\Gamma)+\sum_{S \in W(\Gamma)}  
	C(\prod_{\gamma \in S}\gamma)\phi(\Gamma/S)\big)\\
			&=&(id-T)(C\ast (\phi-e))(\Gamma),
\end{eqnarray*}
where we wrote $T$ for the general subtraction map. It is here that the Hopf algebraic framework may imply a restriction on the algebraic nature of the map $T$, due to the needed multiplicativity of $C$. As an example we recall that in the context of MS in DR, the Rota--Baxter relation for $T$, i.e.~$T(x)T(y)=T(xT(y)+T(x)y - xy)$, is needed to lift Bogoliubov's classical recursion to the Hopf algebraic framework \cite{CKII,Kchen}. In the following, we show that the exponential method provides a sound Hopf algebraic procedure to construct counterterms in the group of characters, which allows to incorporate momentum subtraction schemes. 

Let a subtraction scheme be associated to the target algebras $A_\Gamma$, that is assume that a subtraction map $P_-^\Gamma$ is defined on each of these algebras. The focus of the construction presented below is to deal with RT schemes (\ref{RT}) introduced in the foregoing section. Note that this is actually the reason why we depart from the recursion presented in reference \cite{EFP}, and construct recursively the counterterm character instead of the renormalised Feynman rule character.

\smallskip

The subtraction maps $P_-^\Gamma$ induce an operator ${\mathcal P}_\pm$ on $G$:
\begin{equation}
\label{KeyProperty}
	{\mathcal P}_\pm(\phi) (\Gamma_1 \cdots \Gamma_k):=
	P_\pm^{\Gamma_1}(\phi(\Gamma_1)) \cdots P_\pm^{\Gamma_k}(\phi(\Gamma_k)),
\end{equation}
and:
$$
	{\mathcal P}_\pm(\phi) (\un):=1,
$$
where $\Gamma_i \in F$, and $\phi$ is an arbitrary element in $G$. This definition is natural from the point of view of Taylor expansions of products of functions of not necessarily independent variables. 

\begin{lem}\label{rt} 
For a RT scheme, and any $\phi, \psi \in G$:
\begin{equation}
\label{RTproperty}
	{\mathcal P_+}\big({\mathcal P_+}(\phi) \ast{\mathcal P_+}(\psi)\big)={\mathcal P_+}(\phi) \ast {\mathcal P_+}(\psi).
\end{equation}
\end{lem}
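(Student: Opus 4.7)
The plan is to reduce the claim to a single graph $\Gamma$ via multiplicativity, then unfold the convolution using the coproduct and close each term with the $\mathrm{RT}$ hypothesis together with idempotence of $P_+^\Gamma$.

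First I would observe that $\mathcal{P}_+(\phi)$ is multiplicative by the very definition~(\ref{KeyProperty}): for disjoint graphs $\Gamma_1,\ldots,\Gamma_k\in F$,
$$\mathcal{P}_+(\phi)(\Gamma_1\cdots\Gamma_k)=\prod_{i=1}^k \mathcal{P}_+(\phi)(\Gamma_i).$$
Since $H$ is a commutative bialgebra the coproduct is an algebra morphism, so $\Delta(\Gamma_1\cdots\Gamma_k)=\prod_i\Delta(\Gamma_i)$; hence the convolution product of two multiplicative $F$-adapted linear forms is again multiplicative. In particular $\mathcal{P}_+(\phi)\ast\mathcal{P}_+(\psi)$ and $\mathcal{P}_+\bigl(\mathcal{P}_+(\phi)\ast\mathcal{P}_+(\psi)\bigr)$ are multiplicative. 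It therefore suffices to establish (\ref{RTproperty}) on a single 1PI UV divergent graph $\Gamma$.

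Next I would unfold the convolution at $\Gamma$ using the explicit form of the coproduct, together with $\mathcal{P}_+(\phi)(\un)=\mathcal{P}_+(\psi)(\un)=1$ and the multiplicativity of $\mathcal{P}_+(\phi)$ on a spinney $S=\{\gamma_1,\ldots,\gamma_n\}\in W(\Gamma)$:
$$\bigl(\mathcal{P}_+(\phi)\ast\mathcal{P}_+(\psi)\bigr)(\Gamma)=P_+^{\Gamma}(\phi(\Gamma))+P_+^{\Gamma}(\psi(\Gamma))+\sum_{S\in W(\Gamma)}\Bigl(\prod_{\gamma\in S}P_+^{\gamma}(\phi(\gamma))\Bigr)\,P_+^{\Gamma/S}(\psi(\Gamma/S)).$$
Since on a single graph $\mathcal{P}_+$ acts simply as $P_+^\Gamma$, applying $\mathcal{P}_+$ to the right-hand side and using linearity of $P_+^\Gamma$ turns each summand into its $P_+^\Gamma$-image. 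The first two summands are preserved because $P_-^\Gamma$ is a projector, so $P_+^\Gamma=\mathrm{id}-P_-^\Gamma$ is also a projector and $(P_+^\Gamma)^2=P_+^\Gamma$. For each $S\in W(\Gamma)$, the RT hypothesis (\ref{RT}) applied with $x_\gamma:=\phi(\gamma)$ and $x_\Gamma:=\psi(\Gamma/S)$ yields
$$P_+^\Gamma\Bigl(\bigl(\prod_{\gamma\in S}P_+^\gamma(\phi(\gamma))\bigr)P_+^{\Gamma/S}(\psi(\Gamma/S))\Bigr)=\bigl(\prod_{\gamma\in S}P_+^\gamma(\phi(\gamma))\bigr)P_+^{\Gamma/S}(\psi(\Gamma/S)),$$
so every term is invariant. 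Summing gives $\mathcal{P}_+\bigl(\mathcal{P}_+(\phi)\ast\mathcal{P}_+(\psi)\bigr)(\Gamma)=\bigl(\mathcal{P}_+(\phi)\ast\mathcal{P}_+(\psi)\bigr)(\Gamma)$, which, combined with the reduction step, proves the lemma.

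The main obstacle is essentially bookkeeping rather than structural: one must carefully track the target algebras $A_\gamma,A_{\Gamma/S},A_\Gamma$ and the implicit integrations over internal momenta of $\Gamma$ inherent in the Bogoliubov-style notation, so as to be sure that the RT identity is applied in the correct algebra $A_\Gamma$. Once multiplicativity reduces the statement to a single graph, idempotence of $P_+^\Gamma$ disposes of the "boundary" terms and the RT axiom takes care of the spinney terms in one stroke.
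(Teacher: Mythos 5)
Your proposal is correct and follows essentially the same route as the paper's proof: reduce to a connected graph by multiplicativity of $\mathcal{P}_+$ and of convolution of characters, expand the convolution via the coproduct, dispose of the two "trivial coproduct" terms by idempotence of $P_+^\Gamma$, and close each spinney term with the RT axiom applied to $x_\gamma=\phi(\gamma)$, $x_\Gamma=\psi(\Gamma/S)$. Your extra bookkeeping remark about the target algebras and implicit integrations matches the conventions the paper fixes just after Definition~\ref{def:SubtractionSchemes}.
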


\begin{proof}
Due to the multiplicativity properties of ${\mathcal P_+}$, and since ${\mathcal P_+}(\phi)$ and ${\mathcal P_+}(\psi)$ belong to $G$, it is enough to prove the identity when the operators are acting on a connected graph $\Gamma$:
\allowdisplaybreaks{
\begin{eqnarray*}
\lefteqn{
	{\mathcal P_+}\big({\mathcal P_+}(\phi)\ast{\mathcal P_+}(\psi)\big)(\Gamma) 
	= P_+^\Gamma\big({\mathcal P_+}(\phi)\ast{\mathcal P_+}(\psi)(\Gamma)\big)}\\
	&=& P_+^\Gamma\big(P_+^\Gamma(\phi(\Gamma))\big)
		+ P_+^\Gamma\big(P_+^\Gamma(\psi(\Gamma))\big)
		+ P_+^\Gamma\big( \sum_{S \in W(\Gamma)} \prod_{\gamma \in S} P_+^\gamma(\phi(\gamma)) 
		P_+^{\Gamma/S}(\psi(\Gamma/S ))\big)\\
	&\stackrel{(\ref{RT})}{=}& P_+^\Gamma(\phi(\Gamma))+P_+^\Gamma(\psi(\Gamma))
		+ \sum_{S \in W(\Gamma)} \prod_{\gamma \in S} P_+^\gamma(\phi(\gamma)) P_+^{\Gamma/S}(\psi(\Gamma/S ))\\
	&= &\big({\mathcal P_+}(\phi)\ast{\mathcal P_+}(\psi)\big)(\Gamma).
\end{eqnarray*}}
\end{proof}

\begin{defn} \label{def:regular}
A character $\varphi \in G$ is said to be regular (irregular) up to order $n$, or $n$-regular (respectively $n$-irregular), if $\mathcal{P}_+(\varphi) \circ \pi_{(l)}  = \varphi_{(l)}$ ($\mathcal{P}_-(\varphi) \circ \pi_{(l)}  = \varphi_{(l)}$) for all  $l \leq n$.  A character is called regular (irregular) if it is $n$-regular ($n$-irregular) for all $n$.
\end{defn}

\begin{defn} \label{def:BWH}
A character $\phi \in G$ admits a BWH decomposition if there exists irregular and regular characters $\phi_-$ and $\phi_+$, respectively, such that $\phi=\phi_-^{-1} \ast \phi_+$ or, equivalently $\phi_- \ast \phi = \phi_+$.
\end{defn}

In DR+MS and other RB schemes, the BWH decomposition of a character always exists and is uniquely defined. However, in general, a character may admit several BWH-like decompositions.

From now on, we focus on RT schemes, having especially in mind the example of minimal momentum subtraction ---we will explain in the conclusion of this section how the reasoning can be adapted to the other types of subtraction schemes, i.e.~CT and ST schemes.

\begin{cor}\label{cor:RTclosed}
 For a RT scheme, the convolution products of regular (resp. $n$-regular) characters are regular (resp. $n$-regular) characters.
\end{cor}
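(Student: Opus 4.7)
The plan is to reduce the statement to an immediate consequence of Lemma \ref{rt} by exploiting (i) the multiplicativity of characters and of the operator $\mathcal{P}_+$ defined in (\ref{KeyProperty}), and (ii) the fact that the coproduct respects the loop-number grading of $H$. The fully regular case is the easiest and will be treated first; the $n$-regular case follows the same line with a small book-keeping argument on degrees.

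\textbf{Step 1 (regular case).} Given regular characters $\phi,\psi\in G$, Definition \ref{def:regular} asserts $\mathcal{P}_+(\phi)=\phi$ and $\mathcal{P}_+(\psi)=\psi$. Plugging these equalities directly into Lemma \ref{rt} gives
\begin{equation*}
\mathcal{P}_+(\phi\ast\psi)=\mathcal{P}_+\bigl(\mathcal{P}_+(\phi)\ast\mathcal{P}_+(\psi)\bigr)=\mathcal{P}_+(\phi)\ast\mathcal{P}_+(\psi)=\phi\ast\psi,
\end{equation*}
so $\phi\ast\psi$ is again regular. (Here I use that $\phi\ast\psi\in G$, since the convolution of two characters of a commutative Hopf algebra is again a character.)

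\textbf{Step 2 ($n$-regular case, reduction to connected graphs).} Assume $\phi,\psi$ are $n$-regular. Since $\phi\ast\psi\in G$ is multiplicative and $\mathcal{P}_+$ is defined multiplicatively on products of graphs by (\ref{KeyProperty}), to check $n$-regularity it suffices to show $P_+^\Gamma\bigl((\phi\ast\psi)(\Gamma)\bigr)=(\phi\ast\psi)(\Gamma)$ for every connected $\Gamma\in F$ with $|\Gamma|\le n$; the case of disjoint unions with total loop number $\le n$ follows at once, since each factor then has loop number $\le n$.

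\textbf{Step 3 ($n$-regular case, using the grading).} Fix such a connected $\Gamma$ with $|\Gamma|=l\le n$. The coproduct
\begin{equation*}
\Delta(\Gamma)=\Gamma\otimes\un+\un\otimes\Gamma+\sum_{S\in W(\Gamma)}\prod_{\gamma\in S}\gamma\,\otimes\,\Gamma/S
\end{equation*}
respects the loop-number grading, namely $\sum_{\gamma\in S}|\gamma|+|\Gamma/S|=|\Gamma|=l\le n$, so every graph appearing on either side of each tensor has loop number at most $n$. By $n$-regularity of $\phi$ and $\psi$ one may therefore replace $\phi(\gamma)$ by $P_+^\gamma(\phi(\gamma))$ and $\psi(\Gamma/S)$ by $P_+^{\Gamma/S}(\psi(\Gamma/S))$ (and similarly for the two boundary terms) without changing the value, yielding
\begin{equation*}
(\phi\ast\psi)(\Gamma)=\bigl(\mathcal{P}_+(\phi)\ast\mathcal{P}_+(\psi)\bigr)(\Gamma).
\end{equation*}

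\textbf{Step 4 (conclusion).} Apply Lemma \ref{rt} to the right-hand side: $\mathcal{P}_+\bigl(\mathcal{P}_+(\phi)\ast\mathcal{P}_+(\psi)\bigr)=\mathcal{P}_+(\phi)\ast\mathcal{P}_+(\psi)$, i.e.\ $P_+^\Gamma$ acts as the identity on this element. Combining with Step 3 gives $P_+^\Gamma\bigl((\phi\ast\psi)(\Gamma)\bigr)=(\phi\ast\psi)(\Gamma)$, so $\phi\ast\psi$ is $n$-regular. No step is really an ``obstacle'': the only point requiring care is the verification in Step 3 that the grading hypothesis on $\Gamma$ propagates to every graph occurring in $\Delta(\Gamma)$, which is precisely what makes the loop-number grading of the Connes--Kreimer Hopf algebra useful here.
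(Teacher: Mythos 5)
Your proof is correct and follows essentially the same route as the paper, whose entire argument is the remark that the corollary ``follows from the computation in the proof of Lemma~\ref{rt}'': in the regular case you simply substitute $\mathcal{P}_+(\phi)=\phi$, $\mathcal{P}_+(\psi)=\psi$ into the lemma, exactly as intended. Your Steps 2--4 merely make explicit the degree bookkeeping (the coproduct respecting the loop grading) that the paper leaves implicit for the $n$-regular case, which is a fair and harmless elaboration rather than a different argument.
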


This follows from the computation in the proof of Lemma~\ref{rt}. Notice that this property does not hold for products of irregular characters.

Recall that a $F$-adapted linear map $\mu$ on $H$ is called an infinitesimal character if and only if it vanishes on $H_0$ and on all non trivial products of graphs: $\mu (\Gamma_1 \cdots \Gamma_k)=0$ whenever $k>1$, where the $\Gamma_i$s are arbitrary graphs in $F$. By general properties of graded connected commutative Hopf algebras, the convolution exponential of an infinitesimal character is a character (and conversely, the convolution logarithm of a character is an infinitesimal character, see e.g.~\cite{EFGP}). We say that an infinitesimal character $\mu$ is regular if, for an arbitrary $\Gamma\in F$, $P_+^\Gamma(\mu(\Gamma))=\mu(\Gamma)$.

\begin{lem}\label{lem:ExpLogRT}
For a RT scheme, the exponential of a regular infinitesimal character is a regular character. Conversely, the logarithm of a regular character is a regular infinitesimal character.
\end{lem}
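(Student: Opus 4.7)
The plan is to reduce everything to a single combinatorial observation which I will call the \emph{chain lemma}: for any strict chain $\gamma_1 \subsetneq \gamma_2 \subsetneq \cdots \subsetneq \gamma_k$ of 1PI UV divergent graphs in $F$ with $\{\gamma_i\} \in W(\gamma_{i+1})$ for each $i<k$, and any elements $z_i \in A_{\gamma_i/\gamma_{i-1}}$ (setting $\gamma_0 := \emptyset$), the product $\prod_{i=1}^k P_+^{\gamma_i/\gamma_{i-1}}(z_i)$ is fixed by $P_+^{\gamma_k}$. I prove this by a secondary induction on $k$: the base case $k=1$ is just idempotence of $P_+^{\gamma_1}$. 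For $k \geq 2$, setting $Z := \prod_{i<k} P_+^{\gamma_i/\gamma_{i-1}}(z_i)$, the inductive hypothesis applied to the sub-chain $\gamma_1 \subsetneq \cdots \subsetneq \gamma_{k-1}$ gives $P_+^{\gamma_{k-1}}(Z) = Z$; then the RT identity \eqref{RT} applied to $\gamma_k$ with the single-element spinney $S = \{\gamma_{k-1}\} \in W(\gamma_k)$ and data $x_{\gamma_{k-1}} := Z$, $x_{\gamma_k} := z_k$ closes the inductive step.

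For the forward direction, let $\mu$ be a regular infinitesimal character and $\phi := \exp^\ast(\mu)$; on $\Gamma \in F$ one has $\phi(\Gamma) = \mu(\Gamma) + \sum_{k \geq 2} \mu^{\ast k}(\Gamma)/k!$. I show $P_+^\Gamma(\mu^{\ast k}(\Gamma)) = \mu^{\ast k}(\Gamma)$ for all $k \geq 1$ by an outer induction on $n := |\Gamma|$. The base $n=1$ is trivial since any 1-loop graph is primitive and $\mu^{\ast k}$ vanishes on primitives for $k \geq 2$. For $n>1$ and $k \geq 2$, because $\mu$ vanishes on $\un$ and on products, coassociativity rewrites $\mu^{\ast k}(\Gamma) = \langle \mu^{\otimes k}, \Delta^{[k-1]}(\Gamma)\rangle$ as a sum over flags of proper single-element spinneys $\emptyset = \gamma_0 \subsetneq \gamma_1 \subsetneq \cdots \subsetneq \gamma_{k-1} \subsetneq \gamma_k = \Gamma$, each contributing $\prod_{i=1}^k \mu(\gamma_i/\gamma_{i-1})$. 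Every quotient $\gamma_i/\gamma_{i-1}$ then has loop number strictly less than $n$, so regularity of $\mu$ gives $\mu(\gamma_i/\gamma_{i-1}) = P_+^{\gamma_i/\gamma_{i-1}}(\mu(\gamma_i/\gamma_{i-1}))$, and the chain lemma (applied with $z_i := \mu(\gamma_i/\gamma_{i-1})$) concludes. Summing in $k$ yields $P_+^\Gamma(\phi(\Gamma)) = \phi(\Gamma)$, and since $\phi$ is a character by the classical exp/log correspondence on a graded connected commutative Hopf algebra, it is a regular character.

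For the converse, let $\phi$ be a regular character and $\mu := \log^\ast(\phi)$, which is an infinitesimal character by the same classical correspondence. I induct once more on $|\Gamma|$. The base case $n=1$ is immediate since on a primitive $\Gamma$ the higher log-terms vanish, leaving $\mu(\Gamma) = \phi(\Gamma)$, which is $P_+^\Gamma$-invariant by regularity of $\phi$. For $n>1$, I use the inverted identity $\mu(\Gamma) = \phi(\Gamma) - \sum_{k \geq 2} \mu^{\ast k}(\Gamma)/k!$: the outer inductive hypothesis supplies regularity of each factor $\mu(\gamma_i/\gamma_{i-1})$ appearing in the flag expansion of $\mu^{\ast k}(\Gamma)$ for $k \geq 2$, so the same chain-lemma argument as above yields $P_+^\Gamma(\mu^{\ast k}(\Gamma)) = \mu^{\ast k}(\Gamma)$. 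Combined with $P_+^\Gamma(\phi(\Gamma)) = \phi(\Gamma)$ this gives $P_+^\Gamma(\mu(\Gamma)) = \mu(\Gamma)$.

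The main obstacle I anticipate is justifying rigorously the combinatorial identification of $\mu^{\ast k}(\Gamma)$ with a sum indexed by flags of single-element proper spinneys, which relies on coassociativity together with the vanishing of $\mu$ on $\un$ and on nontrivial products at every stage of the iterated coproduct. The notational bookkeeping of the internal momentum integrations implicit in the shorthand $\prod_{\gamma\in S} P_+^\gamma(x_\gamma) \cdot P_+^{\Gamma/S}(x_\Gamma)$, as flagged in the discussion following Definition \ref{def:SubtractionSchemes}, is technically delicate but conceptually routine once the flag picture is in place.
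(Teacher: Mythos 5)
Your overall strategy coincides with the paper's: expand $\exp^\ast$ and $\log^\ast$, use that $\mu$ kills $\un$ and nontrivial products so that only tensors all of whose slots are single connected 1PI graphs survive against $\mu^{\otimes k}$, and then push $P_+^\Gamma$ through each surviving term using the RT identity (\ref{RT}) together with regularity of $\mu$; your chain lemma is a reasonable way of making precise the nested case that the paper treats only tersely. However, the combinatorial identification on which you rest everything is not correct: the surviving terms of $\mu^{\ast k}(\Gamma)$ are \emph{not} indexed by flags of single-element spinneys. The iterated coproduct also produces tensors in which pairwise disjoint 1PI subgraphs occupy different slots, coming from multi-element spinneys. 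Concretely, take the paper's example two-point graph $\Gamma$ with the two disjoint one-loop subgraphs $\gamma_1,\gamma_2$ and wood $W(\Gamma)=\{\{\gamma_1\},\{\gamma_2\},S_{12}\}$: the term $\gamma_1\gamma_2\otimes\Gamma/S_{12}$ of $\Delta(\Gamma)$ splits under a further coproduct into $\gamma_1\otimes\gamma_2\otimes\Gamma/S_{12}$ plus its transpose, so $\mu^{\ast 3}(\Gamma)$ contains $2\,\mu(\gamma_1)\mu(\gamma_2)\mu(\Gamma/S_{12})$; since the only proper 1PI subgraphs of $\Gamma$ are $\gamma_1$ and $\gamma_2$, there is no connected $\delta$ with $\gamma_1\subsetneq\delta\subsetneq\Gamma$, and this term is not of your flag form $\prod_i\mu(\delta_i/\delta_{i-1})$ (your formula would even predict $\mu^{\ast 3}(\Gamma)=0$ here). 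Your chain lemma, which only ever invokes (\ref{RT}) with one-element spinneys, therefore does not reach these terms, and the same gap propagates into your converse direction, which uses the identical expansion.

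The gap is repairable, and the repair is essentially what the paper does: group each surviving term of $\mu^{\ast(k+1)}(\Gamma)$ according to the spinney $S\in W(\Gamma)$ for which the last slot is $\Gamma/S$; after the internal integrations the remaining factors recombine into one element of $A_\gamma$ for each $\gamma\in S$, each of which is itself a surviving term of some $\mu^{\ast j}(\gamma)$ with $|\gamma|<|\Gamma|$. An induction on the loop number (rather than on chains) then shows each of these is fixed by $P_+^\gamma$, the last slot is fixed by $P_+^{\Gamma/S}$ by regularity of $\mu$, and one concludes with (\ref{RT}) applied to the full, possibly multi-element, spinney $S$. Equivalently, you can strengthen your chain lemma from chains of single 1PI subgraphs to nested sequences of spinneys (forests). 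With that correction your argument for both directions goes through; note that for the converse the paper argues more quickly, deducing it from Lemma~\ref{rt}, while your route through the $\log^\ast$ expansion again requires the corrected combinatorics.
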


\begin{proof}
The second statement follows from Lemma~\ref{rt}. To prove the first, since $\exp^{\ast}(\mu)$ is a character and because of the multiplicativity properties of ${\mathcal P}_+$, it is enough to prove that, for $\Gamma \in F$, ${\mathcal P}_+(\exp^{\ast}(\mu))(\Gamma)=\exp^{\ast}(\mu)(\Gamma)$. Since $\mu$ is infinitesimal, the action of $\mu^{\otimes n}$ vanishes on products $h_1\otimes ...\otimes h_n$, $h_i\in H$, whenever one of the $h_i$s is either the empty graph or a nontrivial product of 1PI graphs. We get:
$$
	{\mathcal P}_+(\exp^{\ast}(\mu))(\Gamma)
	=P_+^\Gamma(\exp^{\ast}(\mu)(\Gamma))
	=P_+^\Gamma \big(\sum_k \frac{1}{(k+1)!} \sum_{I_k} (\prod_\gamma \mu(\gamma)) \mu(\Gamma/S)\big),
$$
where in the sum, $I_k$ parametrizes all the terms $\mu(\gamma_1)\cdots \mu(\gamma_k)\mu(\gamma_{k+1})$, with $\gamma_i\in F$, showing up in the expansion of $\mu^{\ast k+1}(\Gamma)$. The notation $\gamma_{k+1}=\Gamma/S$ indicates that (by definition of the coproduct and since it is coassociative) this last graph identifies with $\Gamma$ where the elements of a given spinney have been contracted to points. The proof follows since the scheme is RT and $\mu(\gamma)=P_+^\gamma(\mu(\gamma))$ for any $\gamma$.
\end{proof}

\begin{prop}(Exponential method for RT schemes) \label{prop:Exp2}
Let $\varphi$ be a Feynman rule character, i.e.~an element of $G$. We consider the following recursion:
$$
	\varphi^-_{n+1} := \Upsilon^+_{n+1} * \varphi_{n}^- ,
$$
where $\Upsilon_{0}^+ := e$, $\varphi_{0}^-:=\varphi^{-1}$, $\Upsilon^+_{n+1}:=\exp^*(- \mu_{n+1})$ and $\mu_{n+1}:= \mathcal{P}_+(\varphi_{n}^-)\circ \pi_{(n+1)}$. Then:
\begin{enumerate}
\item
	The map $\mu_{n+1}$ is an infinitesimal character.
\item
	The map $\Upsilon^+_{n+1}$, called the order $n$ renormalisation factor, is a regular character. 
\item 
	The map $\varphi^-_{n+1}$ is a $n+1$-irregular character.
\end{enumerate}
\end{prop}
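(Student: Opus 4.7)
\medskip

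\textbf{Plan of proof.} I would prove all three statements simultaneously by induction on $n$, with the induction hypothesis being the $n$-irregularity of $\varphi_n^-$. The base case $n=0$ is immediate: $\varphi_0^-=\varphi^{-1}$ is a character, and $0$-irregularity reduces to checking the statement on $H_0=k$, where both sides equal $1$ on $\un$. For the inductive step, I would establish (1), then (2), then (3) in order.

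For (1), I would show $\mu_{n+1}$ vanishes on $H_0$ (by the projector $\pi_{(n+1)}$) and on non-trivial products. Given $\Gamma_1\cdots\Gamma_k$ with $k\geq 2$ and total loop order $n+1$, each factor has $|\Gamma_i|\leq n$. The induction hypothesis asserts $\varphi_n^-(\Gamma_i)=P_-^{\Gamma_i}(\varphi_n^-(\Gamma_i))$, and since $P_+^{\Gamma_i}$ is the complementary projector, $P_+^{\Gamma_i}(\varphi_n^-(\Gamma_i))=0$. By the multiplicativity built into~(\ref{KeyProperty}), the product $\mathcal{P}_+(\varphi_n^-)(\Gamma_1\cdots\Gamma_k)$ then vanishes, proving that $\mu_{n+1}$ is an infinitesimal character.

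For (2), $\Upsilon^+_{n+1}=\exp^*(-\mu_{n+1})$ is a character by the standard exponential/logarithm correspondence in graded connected Hopf algebras. To apply Lemma~\ref{lem:ExpLogRT} I need $-\mu_{n+1}$ to be a regular infinitesimal character, which amounts to $P_+^\Gamma(\mu_{n+1}(\Gamma))=\mu_{n+1}(\Gamma)$ for each $\Gamma\in F$. Since $\mu_{n+1}(\Gamma)=P_+^\Gamma(\varphi_n^-(\Gamma))$ (or $0$ if $|\Gamma|\neq n+1$), this follows directly from the idempotence $P_+^\Gamma P_+^\Gamma=P_+^\Gamma$. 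Lemma~\ref{lem:ExpLogRT} then gives regularity of $\Upsilon^+_{n+1}$.

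For (3), $\varphi_{n+1}^-$ is a character as the convolution product of two characters. The crucial observation is that $\mu_{n+1}$ is concentrated in loop degree $n+1$, so each $\mu_{n+1}^{*k}$ is supported in loop degrees $\geq k(n+1)$; consequently $\Upsilon^+_{n+1}$ coincides with $e$ in loop degrees $0,1,\ldots,n$ and equals $-\mu_{n+1}$ in loop degree $n+1$. Expanding $\varphi_{n+1}^-(\Gamma)=\Upsilon^+_{n+1}*\varphi_n^-(\Gamma)$ via the coproduct formula, for any connected $\Gamma$ with $|\Gamma|\leq n+1$ and any proper spinney $S\in W(\Gamma)$, each $\gamma\in S$ satisfies $1\leq|\gamma|\leq n$ (the former by $1$PI-ness of subgraphs, the latter by properness and gradedness of the coproduct), hence $\Upsilon^+_{n+1}(\gamma)=0$, killing the corresponding mixed term by multiplicativity. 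What remains is $\varphi_{n+1}^-(\Gamma)=\Upsilon^+_{n+1}(\Gamma)+\varphi_n^-(\Gamma)$. For $|\Gamma|\leq n$ this equals $\varphi_n^-(\Gamma)$, so $n$-irregularity transfers from the induction hypothesis. For $|\Gamma|=n+1$ it equals $-P_+^\Gamma(\varphi_n^-(\Gamma))+\varphi_n^-(\Gamma)=P_-^\Gamma(\varphi_n^-(\Gamma))$, whence $P_+^\Gamma(\varphi_{n+1}^-(\Gamma))=P_+^\Gamma P_-^\Gamma(\varphi_n^-(\Gamma))=0$. Together these give $(n+1)$-irregularity.

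The main obstacle I expect is the handling of the mixed terms in the coproduct expansion of $(\Upsilon^+_{n+1}*\varphi_n^-)(\Gamma)$: the argument depends on the fact that every proper $1$PI UV divergent subgraph of a loop-$(n+1)$ graph has strictly smaller, but positive, loop number, so that $\Upsilon^+_{n+1}$ annihilates it. Once this bookkeeping is accepted, the rest reduces to the interplay between idempotence of $P_\pm^\Gamma$ and the fact (Lemma~\ref{rt} and Lemma~\ref{lem:ExpLogRT}) that the RT property makes regularity behave well under convolution and exponentiation.
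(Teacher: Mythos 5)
Your proposal is correct and follows essentially the same route as the paper: induction on the $n$-irregularity of $\varphi_n^-$, using $P_+^{\Gamma}\circ P_-^{\Gamma}=0$ and multiplicativity of $\mathcal{P}_+$ for assertion (1), the regularity of $\mu_{n+1}$ plus Lemma~\ref{lem:ExpLogRT} for (2), and the fact that $\Upsilon^+_{n+1}$ vanishes in degrees $1,\dots,n$ and equals $-\mu_{n+1}$ in degree $n+1$, so that $\varphi^-_{n+1}(\Gamma)=P_-^\Gamma(\varphi_n^-(\Gamma))$ on connected degree-$(n+1)$ graphs, for (3). The only cosmetic difference is that the paper spells out the check of $(n+1)$-irregularity on nontrivial products of graphs of total degree $n+1$, which in your write-up is left implicit but follows at once from the multiplicativity of $\mathcal{P}_-$ in (\ref{KeyProperty}).
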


Before we prove these assertions, let us state the main theorem for the exponential method applied to RT schemes, such as e.g.~momentum subtraction up to the overall degree of divergence. Notice first that the direct limit $\varphi^- := \lim\limits_\rightarrow \varphi^-_{n}$ gives an irregular character to all orders. Moreover: 
$$
	\varphi^- = \Upsilon_{\infty}^+ * \varphi^{-1}
$$ 
or, equivalently, $\varphi^-\ast \varphi = \Upsilon_{\infty}^+$ where:
$$
	\Upsilon_{\infty}^+ := \lim\limits_\rightarrow \Upsilon^+(n),
$$
is a regular character with:
$$
	\Upsilon^+(n) := \Upsilon^+_{n}  \ast \cdots \ast \Upsilon_{1}^+ 
	                       =\exp^*(-\mu_n)*\cdots * \exp^*(-\mu_1).
$$
The character $\Upsilon^+(n)$ is called the order $n$ renormalised character.

\begin{thm}\label{thm:mainExpRenII}
With the hypothesis and the notation of the previous proposition, let us write $C$ and $R$ for Bogoliubov's counterterm character and the renormalised character, respectively. These are those characters on $H$, with values for graphs in $F$ given by Bogoliubov's recursion (\ref{bogoRoperation}),(\ref{bogoRBARoperation}). Then,
$$
	C \ast \varphi= R
$$
with $C=\varphi^-$ and $R=\Upsilon^+_\infty:= \lim\limits_{\to}\exp^*( -\mu_{n})  \ast \cdots \ast \exp^*( -\mu_{1}).$
\end{thm}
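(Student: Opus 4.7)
My plan is to prove, by induction on the loop order $|\Gamma|=N$, that $\varphi^-(\Gamma)=C(\Gamma)$ for every $\Gamma\in F$; this immediately yields $\Upsilon_\infty^+ = \varphi^-\ast\varphi = C\ast\varphi$, and one sees directly that $R=C\ast\varphi$ by expanding the convolution along the coproduct:
$$(C\ast\varphi)(\Gamma) = C(\Gamma)+\varphi(\Gamma)+\sum_{S\in W(\Gamma)}\prod_{\gamma\in S}C(\gamma)\,\varphi(\Gamma/S) = C(\Gamma)+\bar{R}(\Gamma) = R(\Gamma).$$
So the full content of the theorem reduces to the inductive identification $\varphi^-=C$.

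For the inductive step I would fix $\Gamma$ with $|\Gamma|=N$ and assume $\varphi^-(\gamma)=C(\gamma)$ for every $\gamma\in F$ with $|\gamma|<N$; the primitive case $N=1$ needs no hypothesis, since then $W(\Gamma)=\emptyset$. Expanding both $\Upsilon_\infty^+(\Gamma)=(\varphi^-\ast\varphi)(\Gamma)$ and $R(\Gamma)=(C\ast\varphi)(\Gamma)$ against
$$\Delta(\Gamma) = \Gamma\otimes\un+\un\otimes\Gamma+\sum_{S\in W(\Gamma)}\prod_{\gamma\in S}\gamma\otimes\Gamma/S,$$
every subgraph $\gamma\in S$ is proper with $|\gamma|<N$, so the induction hypothesis makes the two spinney sums coincide term by term. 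Subtracting produces the key identity
$$\Upsilon_\infty^+(\Gamma)-R(\Gamma) \;=\; \varphi^-(\Gamma)-C(\Gamma). \qquad (\ast)$$

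To close the induction I would apply $P_+^\Gamma$ to both sides of $(\ast)$. On the right, $C(\Gamma)=-P_-^\Gamma(\bar{R}(\Gamma))$ lies in the image of $P_-^\Gamma$ by its very definition, and by Proposition~\ref{prop:Exp2}\,(3) the character $\varphi^-$ is irregular, so $\varphi^-(\Gamma)$ is also in the image of $P_-^\Gamma$; both are therefore annihilated by $P_+^\Gamma=\mathrm{id}-P_-^\Gamma$. On the left, $R(\Gamma)=P_+^\Gamma(\bar{R}(\Gamma))$ is fixed by $P_+^\Gamma$, and $\Upsilon_\infty^+$ is a regular character: each $\Upsilon_k^+$ is regular by Proposition~\ref{prop:Exp2}\,(2), and by Corollary~\ref{cor:RTclosed} (obtained from Lemma~\ref{rt}) regularity is preserved under convolution, so $\Upsilon^+(n)=\Upsilon_n^+\ast\cdots\ast\Upsilon_1^+$, and hence its direct limit $\Upsilon_\infty^+$, is regular. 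Thus $P_+^\Gamma$ acts as the identity on the left-hand side of $(\ast)$, giving
$$\Upsilon_\infty^+(\Gamma)-R(\Gamma) \;=\; P_+^\Gamma\bigl(\varphi^-(\Gamma)-C(\Gamma)\bigr) \;=\; 0,$$
which closes the induction.

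The hard part, and the one place where the RT hypothesis is genuinely indispensable, is the regularity of $\Upsilon_\infty^+$ secured through Lemma~\ref{rt} and Corollary~\ref{cor:RTclosed}. In a CT-type scheme the cancellation $P_+^\Gamma(\Upsilon_\infty^+-R)=\Upsilon_\infty^+-R$ on the left-hand side of $(\ast)$ would fail; to reach an analogous statement one would presumably need to run a mirrored recursion with the roles of $P_-^\Gamma$ and $P_+^\Gamma$ exchanged, as alluded to at the end of Section~\ref{sect:ExpRen}.
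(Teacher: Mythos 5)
Your proposal is correct and takes essentially the same route as the paper's own proof: an induction on loop order that compares the convolution expansion of $\varphi^-\ast\varphi=\Upsilon^+_\infty$ (resp.\ $C\ast\varphi$) with Bogoliubov's recursion and closes the inductive step by applying $P_\pm^\Gamma$, with the RT hypothesis entering exactly where it does in the paper, namely through the regularity of the renormalisation factors $\Upsilon^+(n)$ secured by Lemma~\ref{rt} and Corollary~\ref{cor:RTclosed}. The only difference is bookkeeping: the paper identifies $C(\Gamma)=\varphi^-_{n+1}(\Gamma)$ and $R(\Gamma)=\Upsilon^+(n+1)(\Gamma)$ term by term from $\bar{R}(\Gamma)=\Upsilon^+(n)(\Gamma)-\varphi^-_{n}(\Gamma)$, whereas you first dispose of $C\ast\varphi=R$ directly, then subtract the two expansions and annihilate the difference with $P_+^\Gamma$, additionally invoking the irregularity of $\varphi^-$ from Proposition~\ref{prop:Exp2}.
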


Let us prove first Proposition \ref{prop:Exp2}. By induction $\varphi^-_{n}$ is a $n$-irregular character. Therefore, for any connected graph $\Gamma$ of degree less or equal to $n$:
$$
	{\mathcal P}_-(\varphi^-_{n})(\Gamma)=P_-^\Gamma(\varphi^-_{n}(\Gamma))=\varphi^-_{n}(\Gamma).
$$
This implies that for any non empty set of connected graphs $\Gamma_1,\ldots,\Gamma_k$ of total degree $|\Gamma_1|+\cdots+|\Gamma_k|=n+1$, we get:
\allowdisplaybreaks{
\begin{eqnarray*}
	\mu_{n+1}(\Gamma_1 \cdots \Gamma_k)
	&=& \mathcal{P}_+(\varphi_{n}^- )(\Gamma_1\cdots \Gamma_k)\\
	&=&P_+^{\Gamma_1}(\varphi_{n}^- (\Gamma_1)) \cdots P_+^{\Gamma_k}(\varphi_{n}^- (\Gamma_k))\\
	&=& 0
\end{eqnarray*}}
since for any graph $\gamma$, $P_+^\gamma\circ P_-^\gamma=0$. The first assertion follows. Now, by its very definition $\mu_{n+1}$ is regular. Its convolution exponential is a regular character. The second assertion follows.

Recall that by construction $\Upsilon^+_{n+1}$ is zero on $H_i$, $0<i\leq n$. Therefore, for simple degree reasons, $\varphi^-_{n+1}= \Upsilon^+_{n+1} * \varphi_{n}^- =\varphi^-_n$ on $H_i$ for $0<i\leq n$. Indeed, in the coproduct $\Delta(h)=h^{(1)} \otimes h^{(2)}$ of any $h \in H_i$, $0<i\leq n$, we have that the degree of $h^{(1)}$ is at most $i$. Besides, $\varphi^-_{n+1}$ is a character as a product of characters. It is therefore a $n$-irregular character, and for  $\Gamma_1,\ldots,\Gamma_k$ of total degree $|\Gamma_1|+\cdots+|\Gamma_k|=n+1$ we get:
\allowdisplaybreaks{
\begin{eqnarray*}
	{\mathcal P_-}(\varphi^-_{n+1})(\Gamma_1 \cdots \Gamma_k)
	&=& {\mathcal P_-}(\varphi^-_{n})(\Gamma_1\cdots \Gamma_k)\\ 
	&=& P_-^{\Gamma_1}(\varphi^-_{n}(\Gamma_1))\cdots P_-^{\Gamma_k}(\varphi^-_{n}(\Gamma_k))\\
	&=& \varphi^-_{n}(\Gamma_1)\cdots \varphi^-_{n}(\Gamma_k)=\varphi^-_{n+1}(\Gamma_1 \cdots \Gamma_k).
\end{eqnarray*}}
Let us now compute the action of ${\mathcal P}_-(\varphi^-_{n+1})$ on a connected graph $\Gamma$ of degree $n+1$. We get, since $\Upsilon^+_{n+1}$ acts as the null map on $H_i$, $i\leq n$ and equals $-\mu_{n+1}$ in degree $n+1$:
$$
	{\mathcal P}_-(\varphi^-_{n+1})(\Gamma)=P_-^\Gamma(\varphi^-_{n+1}(\Gamma))
$$
and:
\allowdisplaybreaks{
\begin{eqnarray*}
	\varphi^-_{n+1}(\Gamma)
	&=& (-\mu_{n+1})*\varphi^-_{n}(\Gamma)\\
	&=& \varphi^-_{n}(\Gamma)-\mu_{n+1}(\Gamma)=\varphi^-_{n}(\Gamma)-\mathcal{P}_+(\varphi_{n}^- )(\Gamma)\\
	&=& \varphi^-_{n}(\Gamma)-P_+^\Gamma(\varphi_{n}^- (\Gamma))=P_-^\Gamma(\varphi_{n}^- (\Gamma)),
\end{eqnarray*}}
from which the last assertion of the proposition follows.

Let us prove now Theorem \ref{thm:mainExpRenII} recursively, and assume that, on products of graphs of degree less or equal to $n$ we have $C=\varphi_n^-$ and $R=\Upsilon^+(n)$. We get immediately (by its very definition) that, for the preparation map acting on an $n+1$ loop graph $\Gamma$ (i.e. degree $n+1$ graph):
$$
	{\bar R}(\Gamma)=(\varphi_n^-\ast (\varphi- e))( \Gamma).
$$
Note that the product $\varphi_n^-\ast (\varphi- e)$ is not a character. Since $\varphi_n^-=\Upsilon^+(n)\ast\varphi^{-1}$, we get finally:
$$
	{\bar R}(\Gamma)=\Upsilon^+(n)(\Gamma)-\varphi_n^-(\Gamma).
$$
According to the Bogoliubov recursion: 
$$
	C(\Gamma)=-P_-^\Gamma({\bar R}(\Gamma))
			  =-P_-^\Gamma(\Upsilon^+(n)(\Gamma)) + P_-^\Gamma(\varphi_n^-(\Gamma)).
$$
However,  since $\Upsilon^+(n)$ is regular, $\Upsilon^+(n)(\Gamma)=P_+^\Gamma(\Upsilon^+(n)(\Gamma))$ and the first term vanishes. We get:
$$
	C(\Gamma)=P_-^\Gamma(\varphi_n^-(\Gamma))
			  =\varphi_n^-(\Gamma)-P_+^\Gamma(\varphi_n^-(\Gamma))
			  =\varphi_{n+1}^-(\Gamma).
$$
Similarly:
\allowdisplaybreaks{
\begin{eqnarray*}
	R(\Gamma)
	&=& P_+^\Gamma(\bar R(\Gamma))=\Upsilon^+(n)(\Gamma)-P_+^\Gamma(\varphi_n^-(\Gamma))\\
	&=&\Upsilon^+(n)(\Gamma)-\mu_{n+1}(\Gamma)\\
	&=&\Upsilon^+(n)*\Upsilon_{n+1}^+(\Gamma)=\Upsilon^+(n+1)(\Gamma),
\end{eqnarray*}
and the proof of the theorem is complete.

The above construction can be dualised for CT schemes. By a symmetry argument, the recursion focusses in that case on the counterterm instead of the renormalised character ---this is the recursion dealt with in \cite{EFP}. 

For ST schemes, which generalises the notion of RB schemes, the picture is even simpler. Indeed, for such schemes the BWH decomposition is necessarily unique. Assume that two such decompositions of the character $\phi$ exist:
$$
	\phi_-\ast\phi_+=\psi_-*\psi_+,
$$
then, by Lemma~\ref{rt} (and the analogous Lemma for CT schemes), products of characters preserve regularity and irregularity properties; $\psi_-^{-1}$ is irregular and $(\phi_+)^{-1}$ is regular. Therefore:
$$
	(\psi_-)^{-1}\ast\phi_-=\psi_+*(\phi_+)^{-1}
$$
is an identity between an irregular and a regular character. It follows that $(\psi_-)^{-1}\ast\phi_-=e=\psi_+*(\phi_+)^{-1}$, and the unicity property follows. Any recursive construction of a counterterm and of a renormalised character leads therefore in that case to the same unique solution, that agrees necessarily with the solution provided by the $R$ operation.

\medskip

In the light of these algebraic developments, we would like to add a remark. One of the advantages of a detailed algebraic description of the process of perturbative renormalisation is an improved understanding of possible freedoms, respectively variations, in the construction of its main ingredients, i.e.~the counterterm and renormalised characters. Moreover, this becomes relevant in the context of applications of renormalisation techniques beyond the usual QFT domain, especially in aspects related to mathematical questions motivated by physics (e.g.~singularities of hypergeometric functions, rough paths, etc.).

Focussing for example once again on RT schemes, we consider now another way of constructing a BWH type decomposition that differs in general from the one we considered previously (and therefore also from the solution to the Bogoliubov recursion). It would be interesting to investigate the properties of this recursion on concrete examples, a task that we postpone to further research, dealing in the present article with the formal algebraic aspects of the theory.

The second exponential method for RT schemes, we would like to propose works very similar to the one in Proposition \ref{prop:Exp2}. Indeed, let $\varphi \in G$. We consider the following recursion $ \varphi^-_{n+1}:= \varphi_{n}^- *\Upsilon^+_{n+1}$, where $\Upsilon_{0}^+ := e$, $\varphi_{0}^-:=\varphi$, $\Upsilon^+_{n+1}:=\exp^*(- \mu_{n+1})$ and $\mu_{n+1}:= \mathcal{P}_+(\varphi_{n}^- )\circ \pi_{(n+1)}$. Then, the map $\mu_{n+1}$ is a regular infinitesimal character and the renormalisation factors $\Upsilon_{n+1}^+ :=\exp^*(- \mu_{n+1})$ are regular characters. The map $\varphi^-_{n+1}$ is a $n+1$-irregular character. Notice that the direct limit $\varphi^- := \lim\limits_\rightarrow\varphi^-_{n}$ gives an irregular character to all orders. Hence, we have the BWH decomposition, $\varphi =\varphi^-\ast (\Upsilon_{\infty}^+ )^{-1}$, where $\Upsilon_{\infty}^+:=\lim\limits_\rightarrow \Upsilon^+(n),$ is a regular character with $\Upsilon^+(n):=  \Upsilon_{1}^+ \ast \cdots \ast\Upsilon^+_{n}$. The inverse character $(\Upsilon^+(n))^{-1}=\exp^*(\mu_1)*\cdots *\exp^*(\mu_n)$ is called the $n$ order renormalised character.

This existence of several different recursions to construct (ir-)regular characters suggests interesting algebraic structures related to the perturbative process of renormalisation. A coherent description of the physical (and possibly mathematical) relevances is a challenging task, and we postpone further exploration to future work.


\subsection*{Acknowledgments}

We would like to thank J.~M. Gracia-Bond\'{i}a for helpful discussions. The first author was supported by a Juan de la Cierva postdoctoral research grant from the Spanish Government. Both authors would like to thank the CNRS (GDR Renormalisation) for support.


\end{document}